\theoremstyle{definition}
\newtheorem{definition}{Definition}
\newtheorem{theorem}{Theorem}
\definecolor{darkblue}{rgb}{0.,0.,0.4}
\definecolor{darkred}{rgb}{0.5,0.,0.}
\renewcommand{\tilde}{\widetilde}
\newcommand{\bs}{\textbf{s}}
\renewcommand{\bm}{\textbf{m}}
\newcommand{\be}{\textbf{e}}
\newcommand{\bl}{\textbf{l}}
\renewcommand{\mod}{~\textrm{mod}~}
\newcommand{\E}{\mathbb{E}}
\newcommand{\tr}{\text{tr}}
\renewcommand{\Pr}{\mathrm{Pr}}
\newcommand{\identity}{\mathbb{I}}
\newcommand{\hilbert}{\mathcal{H}}
\newcommand{\supp}{\mathsf{supp}}
\newcommand{\argmin}{\mathrm{argmin}}
\newcommand{\argmax}{\mathrm{argmax}}
\newcommand{\polylog}{{\rm polylog}}
\newcommand{\poly}{\mathrm{poly}}
\newcommand{\EPR}{\mathrm{EPR}}
\newcommand{\GHZ}{\mathrm{GHZ}}
\newcommand{\eqfig}[2]{\vcenter{\hbox{\includegraphics[height=#1]{#2}}}}
\begin{document}

\title{Mixed-state Quantum Phases: Renormalization and Quantum Error Correction}

\author{Shengqi Sang}
\affiliation{\PI}
\affiliation{\UW}
\affiliation{\KITP}

\author{Yijian Zou}
\affiliation{\PI}
\affiliation{\ST}

\author{Timothy H. Hsieh}
\affiliation{\PI}

\newcommand*{\PI}{Perimeter Institute for Theoretical Physics, Waterloo, Ontario N2L 2Y5, Canada}
\newcommand*{\UW}{Department of Physics and Astronomy, University of Waterloo, Waterloo, Ontario N2L 3G1, Canada}
\newcommand*{\KITP}{Kavli Institute for Theoretical Physics, University of California, Santa Barbara, CA 93106, USA}
\newcommand*{\ST}{Stanford Institute for Theoretical Physics, Stanford CA 94305, USA}

\begin{abstract}
Open system quantum dynamics can generate a variety of long-range entangled mixed states, yet it has been unclear in what sense they constitute phases of matter. 
To establish that two mixed states are in the same phase, as defined by their two-way connectivity via local quantum channels, we use the renormalization group (RG) and decoders of quantum error correcting codes. We introduce a real-space RG scheme for mixed states based on local channels which ideally preserve correlations with the complementary system, and we prove this is equivalent to the reversibility of the channel's action. 
As an application, we demonstrate an exact RG flow of finite temperature toric code in two dimensions to infinite temperature, thus proving it is in the trivial phase. 
In contrast, for toric code subject to local dephasing, we establish a mixed state toric code phase using local channels obtained by truncating an RG-type decoder and the minimum weight perfect matching decoder. 
We also discover a precise relation between mixed state phase and decodability, by proving that local noise acting on toric code cannot destroy logical information without bringing the state out of the toric code phase.

\end{abstract}
\maketitle


\section{Introduction}

Understanding quantum phases of matter is a central task of quantum many-body physics. The traditional focus is on pure states, which are typically ground states of local Hamiltonians. However, in many physical contexts ranging from finite temperature systems to open system dynamics \cite{diss2,diss1}, one is required to deal with mixed states. Recently, in the context of non-equilibrium quantum simulators and computers, there has been significant progress in constructing many different examples of nontrivial mixed states from the effect of local decoherence on symmetry-protected topological and long-range entangled pure states~\cite{de2022symmetry, ma2023average, fan2023diagnostics, bao2023mixed, lee2023quantum, zou2023channeling, ma2023topological, wang2023intrinsic, chen2023separability} or from protocols involving measurement and feedback~\cite{lu2023mixed, zhu2022nishimoris, lee2022decoding, li2023decodable}. 

Given the increasing wealth of examples, it is thus desirable to have a general framework of mixed-state phases and in particular a notion of renormalization for distilling universal long-range properties of a phase. Furthermore, in the class of mixed states obtained by decohering an error-correcting code, there are remarkable instances \cite{fan2023diagnostics} in which mixed state entanglement measures undergo a transition at the same point at which encoded information is lost.  This motivates us to understand the precise connection between mixed state phase transitions and error correction thresholds. In this work we will address these questions.

One way of defining {\it pure-state} phases is via local unitary (LU) circuits~\cite{chen2010LU}: two states are in the same phase if there is a short-depth LU circuit that connects them. This is based on the physical intuition that phases should be defined by long-range properties and representatives only differ in their local properties. 
For mixed states, an analogous definition was proposed by Coser and Perez-Garcia~\cite{coser2019classification}: two mixed states $\rho_1$ and $\rho_2$ are in the same phase if there exists a pair of short-time evolution with local Lindbladians from $\rho_1$ to $\rho_2$ and from $\rho_2$ to $\rho_1$. 
The major difference with the LU definition of pure state phases is that two-way connections are needed since channels are not in general reversible. Another difference is that unlike pure states of interest, there is often no notion of Hamiltonian, gap, or adiabatic path to furnish the local transformations required to connect two mixed states (see however Ref.~\cite{rakovszky2023defining} for recent developments).  These make establishing the existence of a mixed state phase much more challenging.


We draw inspiration from the real-space renormalization group (RG), which has played a major role in statistical mechanics and quantum many-body physics. The idea dates back to Wilson \cite{wilson1975renormalization} and Kadanoff \cite{Kadanoff_1966} who proposed that under block spin transformations, statistical mechanical systems flow to fixed points whose properties are easier to characterize. In the context of quantum many-body systems, real-space RG has led to the development of powerful numerical algorithms, including density matrix renormalization group (DMRG) \cite{white1992dmrg}, multiscale entanglement renormalization ansatz (MERA) \cite{vidal2007entanglement}, as well as theoretical tools, including matrix product states (MPS) \cite{Fannes_1992}, projected entangled pair states (PEPS) \cite{verstraete2004renormalization}, etc.
However, thus far, real-space RG has predominantly been applied to coarse-grain {\it pure} quantum states.


In this work, we define a real-space RG scheme for mixed states involving local channel (LC) transformations to establish the existence of mixed-state phases. We define an ``ideal'' RG to consist of local channels acting on blocks which preserve correlations between different blocks, 
and we prove that the actions of such correlation-preserving channels can be reversed by another channel, thus establishing the phase equivalence of the fine-grained and coarse-grained states. 
As an example, we construct an ideal RG for the two-dimensional toric code at finite temperature and show that the temperature monotonically increases under coarse graining and thus the state does not possess topological order. 

We also consider mixed states obtained by applying local decoherence to quantum error correction codes. There is a notion that the logical information in topological codes is protected by long-range entanglement. With a definition of mixed-state phases, we can make its relation to error correction precise. We prove that short-range correlated noise (represented by a local quantum channel) cannot destroy logical information without also transitioning out of the mixed state topologically ordered phase. 
We illustrate these connections in the example of toric code subject to local dephasing noise, for which we demonstrate the existence of a toric code mixed state phase by constructing (1) a real-space RG scheme based on the Harrington decoder \cite{harrington2004analysis} and (2) (quasi-)local channels based on truncating the minimum weight perfect matching (MWPM) algorithm. Our local version of MWPM can potentially be used for efficiently detecting the toric code mixed-state phase in experiments \cite{doi:10.1126/science.abi8794, doi:10.1126/science.abi8378}.

We mention that several prior related works~\cite{swingle2016mixed, lin2021entanglement} developed a mixed state RG scheme based on purification of the mixed state, which is generally different from our scheme but in some cases can furnish the local channels required in our scheme. Refs.~\cite{de2022matrix, cirac2017matrix} defined an RG fixed point condition for one-dimensional matrix product density operators and related them to boundaries of two-dimensional topological order. Refs.~\cite{lake2022exact, cong2022enhancing} demonstrated how quantum convolutional neural networks~\cite{cong2019quantum} can furnish RG schemes for detecting non-trivial pure state phases. 

This paper is structured as follows. In Sec.~\ref{sec: phase} we define LC transformations and mixed state phase equivalence. 
In Sec.~\ref{sec: mixed RG} we formulate the real-space RG for mixed-states and discuss its implications. 
In Secs.~\ref{sec: examples} to~\ref{sec: noisy_tc} we analyze several examples, including the dephased GHZ state, thermal toric code, and dephased toric code. In Sec.~\ref{sec: the_thm} we prove a relation between decodability and mixed-state phases.

\begin{figure*}
\centering
\includegraphics[width=.99\linewidth]{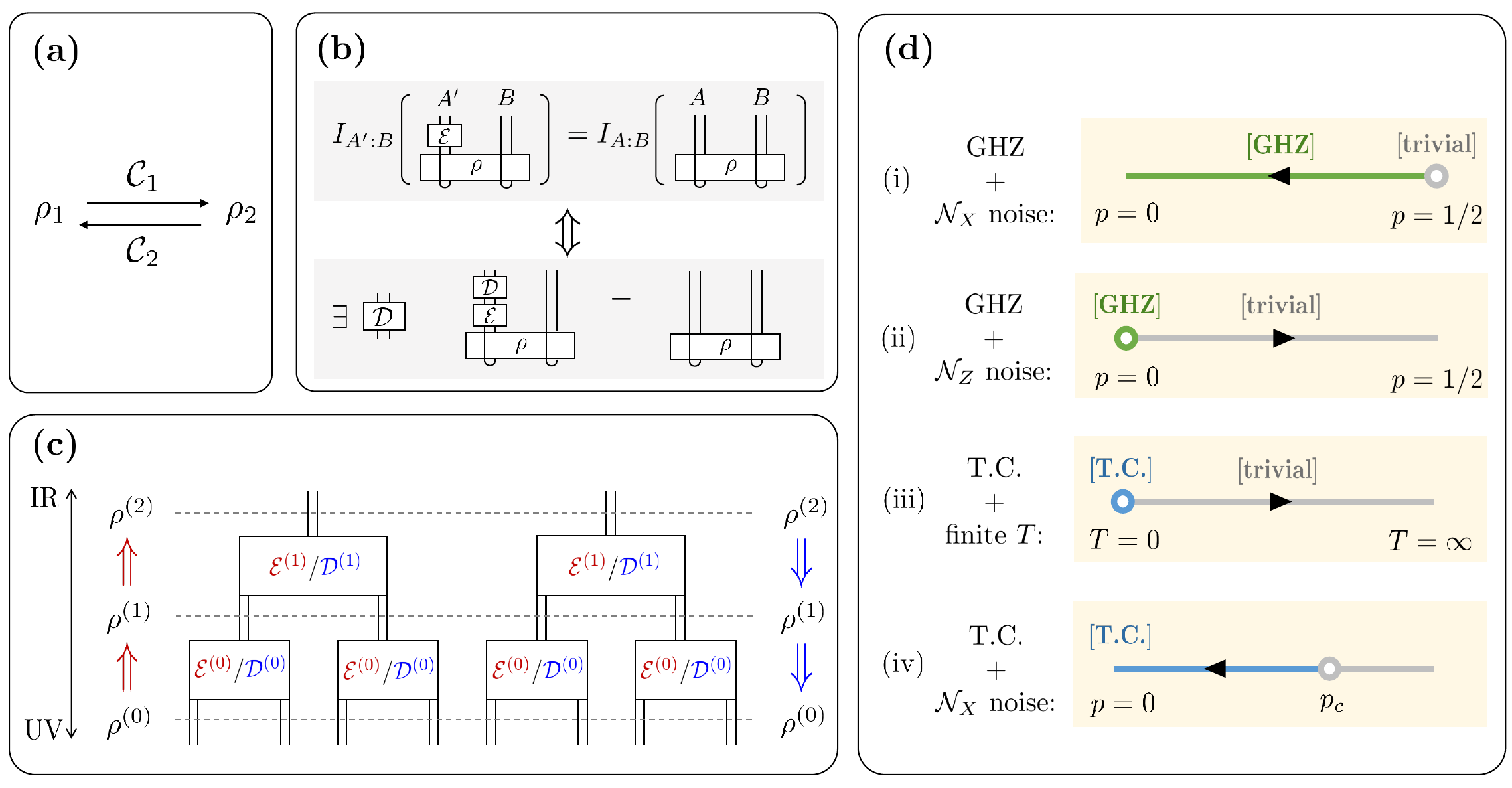}
\caption{
\textbf{(a)} Definition of mixed-state phase equivalence adopted in this work. Two many-body mixed states $\rho_1$ and $\rho_2$ are in the same phase if there is a pair of low-depth spatially local quantum channels $\mathcal{C}_1$ and $\mathcal{C}_2$ such that $\rho_2\approx\mathcal{C}_1(\rho_1)$ and $\rho_1\approx\mathcal{C}_2(\rho_2)$.~~
\textbf{(b)} Illustration of the correlation-preserving criterion in Def.\ref{def: correlation-preserving}. For a given bipartite mixed state $\rho_{AB}$, a quantum channel $\mathcal{E}$ acting on one party is correlation-preserving if it leaves the mutual information between two parties invariant. Thm.\ref{thm: thm2} shows that $\mathcal{E}$ is correlation-preserving if and only if its action can be reversed by another channel $\mathcal{D}$. ~~
\textbf{(c)} Mixed-state RG consists of local channels ($\mathcal{E}$s) which coarse-grain degrees of freedom within a block. 
After iterating, all short-range correlations of the input state are discarded and only long-range ones remain. 
If all coarse-graining channels satisfy the correlation-preserving criterion, then the whole RG process can be reversed, by running from top to bottom and replacing each $\mathcal{E}$ with its recovery map $\mathcal{D}$. ~~
\textbf{(d)} Phase diagrams and RG flows of 4 exemplary mixed states studied in the Sec.\ref{sec: examples}. All 4 states come from perturbing a long-range entangled pure state in an incoherent way: In examples (i, ii) the pure state is the GHZ state, and in (iii, iv) it is the toric code state. In examples (i, ii, iv) the incoherent perturbation is a dephasing noise with strength $p$ acted upon the state, while in (iii) the perturbation is a non-zero temperature. The mixed-state phase corresponding to the GHZ state and the toric code state are denoted by [GHZ] and [T.C.], respectively. 
}
\label{fig: main_fig}
\end{figure*}

\section{Local channel transformations and definition of mixed-state phases}\label{sec: phase}

\subsection{Local channel transformations}
We define local channel transformations following the proposal in~\cite{hastings2011nonzero}.
\begin{definition}[Local channel (LC) transformation]  \label{def: lc_transformation}
On a given lattice of linear dimension $L$, a range-$r$ LC transformation is a quantum channel composed of the following steps:\\
(1) Adding qubits to each lattice site, all initialized in the $\ket{0}$ state;\\
(2) Applying a range-$r$ unitary circuit $U$ on the lattice;\\
(3) Tracing out some qubits on each lattice site. 
\end{definition}

The range of a circuit is defined as the maximal range of each unitary gate times the depth of the circuit. 
Henceforth if $r$ is not specified for an LC transformation, it is assumed that $r/L\rightarrow 0$ in the thermodynamic limit. 

The major difference between local channel transformations and local unitary ones (LU)~\cite{chen2010LU} is step (3). In local unitary transformations, a qubit can be discarded only when it is disentangled from the rest of the system.
In that context, (1) and (3) are inverse operations, and hence LU transformations are invertible. 
In contrast, LC transformations allow discarding a qubit that is still entangled with the rest of the system, \textit{i.e.} $\rho_{i, \bar i}\neq \rho_{i} \otimes \rho_{\bar i}$ with $i$ being the qubit to be discarded and $\bar i$ being the rest of the system. 
As a result, LC transformations are generically non-invertible. 

\begin{figure}[h]
    \centering
    \includegraphics[width=0.9\linewidth]{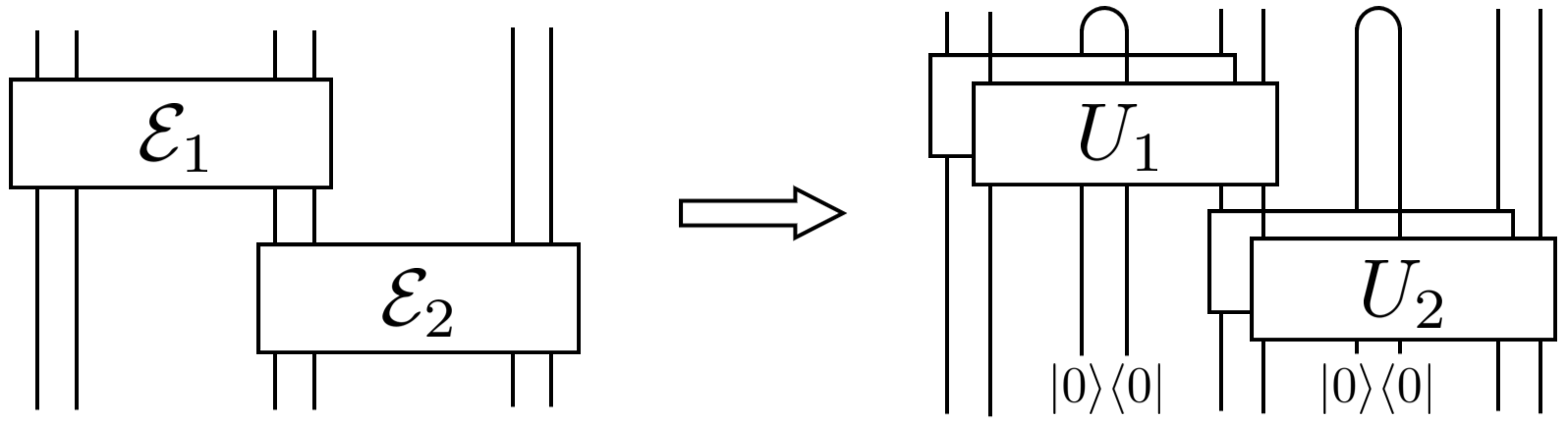}
    \caption{A circuit of local channel gates represented as an LC transformation.}
    \label{fig: circuit_rewritting}
\end{figure}
LC transformations constitute a broad class of operations including any circuit composed of local channel gates, \textit{i.e.} channels that only act on local domains of sites. 
To show this, one needs the Stinespring dilation theorem: any quantum channel $\mathcal{E}_{X\rightarrow Y}$ can be rewritten as: 
\begin{equation}
    \mathcal{E}(\cdot) = \tr_{A'}\left(U ((\cdot)\otimes\ket{0}\bra{0}_{A}) U^\dagger\right)
\end{equation}
where $U$ is a unitary map from $X\cup A$ to $Y\cup A'$. In other words, any quantum channel can be implemented by adding some degrees of freedom, applying a unitary on the joint system, and discarding some degrees of freedom. 
Applying the theorem to a circuit of channel gates, one can first replace each channel gate with its Stinesping dilation form. Then one can move forward all the ancillae addition to the beginning of the circuit and postpone all the tracing-out to  the end of the circuit. We graphically illustrate this in Fig.\ref{fig: circuit_rewritting}. Furthermore, any finite-time  local Lindbladian evolution can also be approximated by LC transformations by trotterizing the continuous dynamics.

\subsection{Definition of mixed-state phase equivalence}

When studying (pure) ground states of gapped local Hamiltonians, two many-body states are defined to be in the same phase if one can be turned into the other through a LU transformation \cite{chen2010LU}. The definition reflects the idea that phases of matter should be characterized by long-range properties of the state, and should remain unchanged under reversible local modifications. 


LC transformations, albeit local, are generally not reversible and can destroy long-range correlations. 
As an example, if one starts from an arbitrary state $\ket{\psi}$ and applies the amplitude damping channel $\mathcal{E}_{\rm damping}(\cdot) := \tr(\cdot) \ket{0}\bra{0}$ to each qubit in the system, the resulting state would be a product state $\ket{0}^{\otimes L}$ without any non-trivial long-range correlation. 
On the other hand, an LC transformation's ability to create correlations is no stronger than LU ones. 
This follows from the fact that an LC transformation is some LU followed by discarding some degrees of freedom. 

Thus the connectivity under LC transformations induces a partial order relation among mixed-states. States are ordered according to the amount of long-range correlation they possess:
if $\rho_2 = \mathcal{C}(\rho_1)$ for some LC transformation $C$, then $\rho_1$ has at least as much long-range correlation as $\rho_2$. 
This naturally leads to the following definition of mixed-state phase equivalence \footnote{The definition resembles the one taken in \cite{coser2019classification}, where a pair of LC transformations is replaced by a pair of (quasi-)local Lindbladian evolutions. A similar definition also appears in \cite{ma2023average} when defining mixed-state symmetry-protected topological orders.}

\begin{definition}[Mixed-state phase equivalence] \label{def: phase_equivalence}
On a given lattice, two many-body mixed states $\rho_1$ and $\rho_2$ are in the same phase if there exists a pair of LC transformations $C_1$ and $C_2$ such that $C_1(\rho_1)\approx \rho_2$ and $C_2(\rho_2)\approx \rho_1$. 
\end{definition}
Several clarifications regarding the definition:
\begin{itemize}
    \item \textbf{Mixed-states of interest}: Though the definition above does not assume any restrictions on states $\rho_{1,2}$, we are interested in  physically relevant mixed states such as local Hamiltonian Gibbs states at finite temperature, gapped ground states subject to decoherence, and steady states of local Lindbladians. 
    \item \textbf{The precise meaning of `$\approx$'}: This requires some distance measure of mixed states. For instance, we could define two mixed states $\rho \approx \sigma$ if and only if $F(\sigma,\rho)>1-\epsilon$ for some small $\epsilon>0$, where $F(\sigma, \rho):=||\sqrt{\sigma}\sqrt{\rho}||_1$ is the (Uhlmann) fidelity.
    \item \textbf{Ranges of LC transformations $C_{1,2}$}: In general we only require the range to be much smaller than the linear size of the lattice $\rho_{1,2}$ is defined on. But as we will see later, it will be sufficient to have a range $r=O(\polylog(L/\epsilon))$ when $\rho_{1,2}$ have finite correlation length.
\end{itemize}

The definition is a natural generalization of pure-state phase equivalence defined through LU transformations. When restricting to pure many-body states, one can show that two states $\ket{\psi_1}$ and $\ket{\psi_2}$ are of the same mixed-state phase if and only if $\ket{\psi_1}$ and $\ket{\psi_2}\otimes \ket{\phi}$ are of the same pure-state phase for some invertible state $\ket{\phi}$. We provide a proof in App.\ref{app: mixed_and_pure}. 

Product states, \textit{e.g.} $\ket{0}^{\otimes L}$, are states without any long-range correlations.
This is also reflected by the partial order relation under LC circuits: any state can be turned into the product state by a LC transformation which only consists of the amplitude damping channel.
Thus we identify the \textit{trivial phase} as the set of states that can be LC transformed from the product state. In other words, a mixed state is in the trivial phase if it can be written as $\rho_{\rm trivial} = \mathcal{C}\left[\ket{0}^{\otimes L}\bra{0}^{\otimes L}\right]$
for some LC transformation $\mathcal{C}$. This is equivalent to requiring that the state can be locally purified into a short-range entangled pure state.

We comment that the above definition treats quantum and classical correlations on the same footing. 
As an example, the state $\rho = \frac{1}{2} (\ket{0^{\otimes L}}\bra{0^{\otimes L}} + \ket{1^{\otimes L}}\bra{1^{\otimes L}})$
is a classical ensemble of $L$ spins which is non-trivial under the above definition, because it has classical long-range correlation. To single out states that contain long-range classical correlation only, we can define a state to be in a \textit{classical phase} if it can be written as $
    \rho_{\rm classical} = \mathcal{C}(\rho_{\Pr(\bs))})$
for some LC transformation $\mathcal{C}$. Here $\rho_{\Pr(s)}:=\sum_{\bs}\Pr(\bs)\ket{\bs}\bra{\bs}$ is a classical distribution $\Pr(\bs)$ of product states $\{\ket{\bs}:\bs\in\{0,1\}^L\}$ represented as a density matrix.

\section{Real-space RG of quantum mixed-states} \label{sec: mixed RG}
To answer whether two given states $\rho_1$ and $\rho_2$ are in the same phase, we need to either construct a pair of local channel transformations or prove their nonexistence. 

Recall that when studying pure-state phases of ground states, adiabatic paths between Hamiltonians provide a convenient way of obtaining phase equivalence. 
Let $\ket{\psi_1}$ and $\ket{\psi_2}$ be ground-states of local Hamiltonians $H_1$ and $H_2$. If there is a path from $H_1$ to $H_2$ in the space of local Hamiltonians such that the energy gap remains $O(1)$ throughout the path, there is a standard way to construct a LU transformation connecting $\ket{\psi_1}$ to $\ket{\psi_2}$ which establishes the phase equivalence \cite{hastings2005quasiadiabatic}. 
For mixed states, there is generally no counterpart to adiabatic paths. 

In this section, we introduce the mixed-state real-space renormalization group (RG) as an alternative way to find LC connections and identify mixed-state phases. 

\subsection{From pure-state RG to mixed-state RG}\label{sec: pure_to_mixed_rg}

Conceptually, RG transformation in classical and quantum statistical mechanics is an iterative coarse-graining process that discards short-range degrees of freedom while preserving long-range ones. 
The idea of using real-space renormalization to study zero-temperature physics of lattice quantum systems (`numerical RG') was pioneered by Wilson when considering impurity problems \cite{wilson1975renormalization}, and was later generalized and developed into a series of powerful RG-based numerical methods including DMRG \cite{white1992dmrg}, entanglement RG \cite{vidal2007entanglement}, etc. 
We refer to all of them as pure-state RGs, in contrast to the mixed-state RG we introduce in this work.
\footnote{Another aspect of pure-state RG is that it preserves the low-energy physics of the system. But here we emphasize this less because for mixed-states, there may not be a notion of energy or Hamiltonian.}

For the sake of presentation, we restrict our attention to one-dimensional systems and only focus on tree-like RG circuits (see Fig.~\ref{fig: rg_circuits}). All the main ideas can be easily generalized to more sophisticated RG circuit structures, \textit{e.g.} entanglement RG circuits \cite{vidal2007entanglement}, as well as higher dimensional systems.

\begin{figure}
    \centering
    \includegraphics[width=1\linewidth]{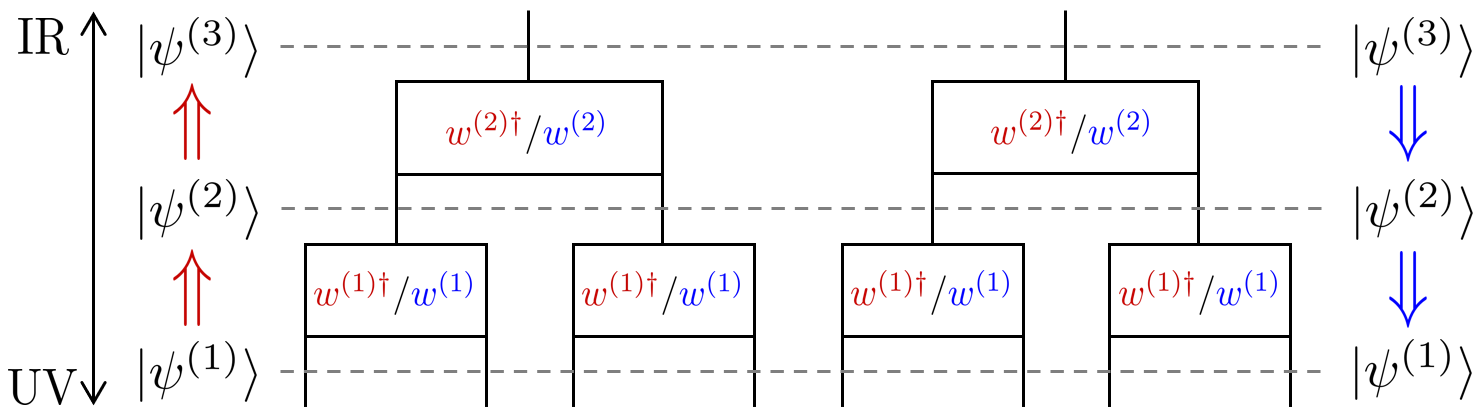}
    \caption{
    \textbf{Real-space RG transformation of pure states}-- 
    Circuit representation of two iterations of pure state RG transformation. At the $\ell$-th iteration, the coarse-graining isometry $w^{(\ell)}$ is determined by the level's input state $\ket{\psi^{(\ell)}}$ using Eq.\eqref{eq: white_rule}. By applying the circuit  from bottom to top (red arrows), all the short-range features of the initial UV state are gradually discarded, and only long-range ones are kept in the IR state $\rho^{(\ell\rightarrow\infty)}$. By  applying the circuit from top to bottom (blue arrows), the circuit generates the UV state $\ket{\psi^{(1)}}$.}
    \label{fig: rg_circuits}
\end{figure}

Pure-state RG, in its simplest form, involves partitioning the lattice into consecutive blocks each with size $b$ and applying a coarse-graining map $w^{\dagger}_B$ to each block $B$ of a pure state $\ket{\psi}$.  
More specfically, coarse-graining involves truncating the Hilbert space, and $w_B$ is an isometry satsifying $w^{\dagger}_B w_B = \mathbb{I}$. 
As proposed \cite{white1992dmrg} for the density matrix renormalization group (DMRG) algorithm, the optimal choice of $w_B$ that preserves all correlations between $B$ and its complement is given by
\begin{equation}\label{eq: white_rule}
      \supp~ w_B w^\dagger_B =\supp~\rho_{B}  
\end{equation}
where $\rho_{B} := \tr_{\bar B} (\ket{\psi}\bra{\psi})$ is the reduced density matrix of the block $B$. $\supp~ K$ of a positive semi-definite matrix $K$ means the subspace spanned by $K$'s eigenstates with positive eigenvalues. 
If the original state $\ket{\psi}$ has area law entanglement $S_B \equiv -\tr[\rho_B \log \rho_B] = O(1)$, then each block is efficiently coarse-grained into a constant dimensional Hilbert space independent of the original block size \footnote{Rigorously speaking, it is only proven that the ground state of a gapped local Hamiltonian satisying the area law can be represented as an MPS with a bond dimension that grows sublinearly with the system size \cite{arad2013area}. However, in pratice, it is usually true that a finite bond dimension suffices to reproduce an accurate wavefunction for arbitrary large (even infinite) system sizes. }. 



Now we turn to 1D mixed-states. In contrast to the pure state case, physical mixed states (\textit{e.g.} ones mentioned below Def.\ref{def: phase_equivalence}) typically have volume-law scaling of $S_B$, leading to inefficient compression using the $w_B$ selected according to Eq.\eqref{eq: white_rule}. This is because $S_B$ results from not only correlations between $B$ and the complementary system $\bar{B}$ but also between $B$ and a purifying environment $E$ of the mixed state.  The latter is the non-universal information that should be discarded.  We thus need a new criterion for finding the coarse-graining map. 

To motivate the criterion we introduce, we observe that Eq.\eqref{eq: white_rule} can be interpreted as the solution to the optimization problem: 
\begin{equation} \label{eq: pure_optimization}
\begin{aligned}
    &\argmin_{w_B} \, \dim_{\rm out}(w^\dagger_B)\\
    s.t.~~~
    &I_{B:\bar B}(w_B^\dagger \ket{\psi}) = I_{B:\bar B}(\ket{\psi}),
\end{aligned}
\end{equation}
where $\dim_{\rm out}(w^\dagger_B)$ is the output dimension of $w^\dagger_B$ and $I_{X:Y}:=S_X+S_Y-S_{XY}$ is the quantum mutual information, a measure of correlations between two parties $X$ and $Y$. The constraint has clear physical meaning in the context of RG: by preserving $I_{B:\bar B}$, it preserves all the long-distance correlation within $\ket{\psi}$. 
We thus use `$I_{B:\bar B}$ preserving' condition as a guideline to generalize Eq.\eqref{eq: white_rule} to mixed-states.

\subsection{Correlation-preserving map}

We make the argument above more precise:
\begin{definition}[correlation-preserving maps]\label{def: correlation-preserving}
For a given bipartite quantum state $\rho=\rho_{A B}$, a quantum channel $\mathcal{E}_{A\rightarrow A'}$ acting on $A$ is \textit{correlation-preserving} with respect to $\rho_{AB}$ if it satisfies
\begin{equation*}
    I_{A':B}(\mathcal{E}_{A\rightarrow A'}(\rho)) = I_{A:B}(\rho)
\end{equation*}
\end{definition}

It is worth noting that a channel being correlation-preserving or not depends on both the input state and the bipartition: the same map $\mathcal{E}$ that is correlation-preserving with respect to one $(\rho, B)$ pair may not be so with respect to another pair. 

Recalling that the motivation for defining an RG scheme is to establish equivalence between two mixed states by finding a local channel transformation and its inverse, ideally we would like $\mathcal{E}$'s action on $\rho$ to be reversible.  Conveniently, the two desired properties (correlation-preserving and reversibility) are equivalent, as we prove in the following theorem. 

\begin{theorem}\label{thm:equivalence}
For a given bipartite quantum state $\rho=\rho_{A B}$, the map $\mathcal{E}_{A\rightarrow A'}$ is correlation-preserving if and only if there exists another quantum channel $\mathcal{D}_{A'\rightarrow A}$, such that:
\begin{equation*}
    \rho = \mathcal{D}_{A'\rightarrow A}\circ \mathcal{E}_{A\rightarrow A'}(\rho)
\end{equation*}
\end{theorem}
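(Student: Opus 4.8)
The plan is to recognize this as a statement about sufficiency of a quantum channel for a family of states, and to route through the theory of data processing and its equality conditions. The ``only if'' direction (correlation-preserving $\Rightarrow$ reversible) is the substantive one; the ``if'' direction is almost immediate since if $\rho = \mathcal{D}\circ\mathcal{E}(\rho)$ then by the data processing inequality applied to $\mathcal{D}$ we get $I_{A:B}(\rho) \le I_{A':B}(\mathcal{E}(\rho)) \le I_{A:B}(\rho)$, the second inequality being data processing for $\mathcal{E}$, forcing equality throughout. So I would dispose of that direction in two lines first.

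For the ``only if'' direction, the natural tool is the equality case of the monotonicity of relative entropy. The key observation is that quantum mutual information can be written as a relative entropy: $I_{A:B}(\rho) = S(\rho_{AB} \,\|\, \rho_A \otimes \rho_B)$. Applying the channel $\mathcal{E}_{A\to A'} \otimes \mathrm{id}_B$ to both arguments, $\mathcal{E}$ maps $\rho_A \otimes \rho_B$ to $\mathcal{E}(\rho_A)\otimes\rho_B$, and monotonicity gives $S(\mathcal{E}(\rho_{AB})\,\|\,\mathcal{E}(\rho_A)\otimes\rho_B) \le S(\rho_{AB}\,\|\,\rho_A\otimes\rho_B)$. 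The correlation-preserving hypothesis, together with the fact that $I_{A':B}(\mathcal{E}(\rho)) \le S(\mathcal{E}(\rho_{AB})\,\|\,\mathcal{E}(\rho_A)\otimes\rho_B)$ (since $I$ is the minimum of the relative entropy to product states, or more simply by monotonicity under tracing), pins $S(\mathcal{E}(\rho_{AB})\,\|\,\mathcal{E}(\rho_A)\otimes\rho_B) = S(\rho_{AB}\,\|\,\rho_A\otimes\rho_B)$, i.e. saturation of data processing. Then I invoke the Petz recovery theorem: when relative entropy is preserved under a channel $\mathcal{N}$ applied to a pair $(\rho,\sigma)$, the Petz map $\mathcal{P}_\sigma^{\mathcal{N}}(\cdot) = \sigma^{1/2}\mathcal{N}^\dagger\big(\mathcal{N}(\sigma)^{-1/2}(\cdot)\mathcal{N}(\sigma)^{-1/2}\big)\sigma^{1/2}$ recovers both: $\mathcal{P}(\mathcal{N}(\rho)) = \rho$ and $\mathcal{P}(\mathcal{N}(\sigma)) = \sigma$. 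Taking $\mathcal{N} = \mathcal{E}_{A\to A'}\otimes\mathrm{id}_B$ and $\sigma = \rho_A\otimes\rho_B$, the Petz map factorizes as $\mathcal{D}_{A'\to A}\otimes\mathrm{id}_B$ with $\mathcal{D}_{A'\to A}(\cdot) = \rho_A^{1/2}\mathcal{E}^\dagger\big(\mathcal{E}(\rho_A)^{-1/2}(\cdot)\mathcal{E}(\rho_A)^{-1/2}\big)\rho_A^{1/2}$ (the $B$ factors in $\sigma^{1/2}$ and $\mathcal{N}(\sigma)^{-1/2}$ cancel against the identity channel), so $\mathcal{D}$ acts only on $A'$, and $\mathcal{D}\otimes\mathrm{id}_B$ applied to $\mathcal{E}(\rho_{AB})$ returns $\rho_{AB}$, which is exactly the claim.

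I anticipate two technical wrinkles. First, invertibility/support issues: $\mathcal{E}(\rho_A)$ may not have full rank, so the inverses above should be read as generalized inverses on the support, and one must check the Petz map is still trace-preserving on the relevant subspace (or extend it arbitrarily to a genuine channel on the full space — this is standard but worth a sentence). Second, and this is the main obstacle to a clean exposition: one must be careful that the equality condition I am invoking is the \emph{exact} saturation of data processing for relative entropy, not merely an approximate recovery statement, and that the saturation genuinely transfers from the scalar equality $I_{A':B}(\mathcal{E}(\rho)) = I_{A:B}(\rho)$ to equality of the two relative entropies with $\sigma = \rho_A\otimes\rho_B$ specifically. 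The subtlety is that $\mathcal{E}(\rho_A)\otimes\rho_B$ need not equal the product of the marginals of $\mathcal{E}(\rho_{AB})$ — indeed $\mathrm{tr}_{A'}\mathcal{E}(\rho_{AB}) = \rho_B$ is fine, but $\mathrm{tr}_B \mathcal{E}(\rho_{AB}) = \mathcal{E}(\rho_A)$ is also fine, so in fact $\mathcal{E}(\rho_A)\otimes\rho_B$ \emph{is} the product of marginals and $S(\mathcal{E}(\rho_{AB})\,\|\,\mathcal{E}(\rho_A)\otimes\rho_B) = I_{A':B}(\mathcal{E}(\rho))$ exactly; so the chain collapses cleanly. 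I would still state this identification explicitly since it is the linchpin.

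As an alternative to the Petz-theorem route, I could give a more self-contained argument using strong subadditivity and its equality condition (Hayden–Jozsa–Petz–Winter): introduce a purification $\ket{\psi}_{ABE}$ of $\rho_{AB}$, and dilate $\mathcal{E}$ to an isometry $V_{A\to A'A''}$, so that correlation preservation becomes a statement $I_{A':B} = I_{A'A'':B}$, equivalently $I_{A'':B\,|\,A'} = 0$, a vanishing conditional mutual information; the structure theorem for its saturation then yields a decomposition of the $A'A''$ Hilbert space exhibiting $\mathcal{D}$ directly. I expect the relative-entropy/Petz argument to be shorter to write, so I would present that as the main proof and perhaps remark on the conditional-mutual-information viewpoint for intuition.
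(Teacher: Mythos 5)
Your proposal is correct, and both directions are handled soundly; the ``if'' direction coincides with the paper's data-processing argument. For the substantive direction, however, you take a genuinely different (though closely related) route. The paper dilates $\mathcal{E}$ with a Stinespring isometry $W_{A\to A'E}$, observes that correlation preservation of $\sigma_{A'EB}=W\rho W^\dagger$ is exactly the vanishing conditional mutual information $I_{B:E|A'}(\sigma)=0$, i.e.\ a quantum Markov chain $B$--$A'$--$E$, then uses the Petz recovery map $\mathcal{T}_{A'\to A'E}$ to rebuild the environment and finally undoes the dilation unitary and traces out the ancilla to get $\mathcal{D}$. You instead bypass the dilation entirely: writing $I_{A:B}(\rho)=S(\rho_{AB}\|\rho_A\otimes\rho_B)$ and noting (correctly, and this is indeed the linchpin) that the marginals of $(\mathcal{E}\otimes\mathrm{id})(\rho_{AB})$ are $\mathcal{E}(\rho_A)$ and $\rho_B$, the hypothesis becomes exact saturation of monotonicity of relative entropy for the pair $(\rho_{AB},\rho_A\otimes\rho_B)$ under $\mathcal{E}\otimes\mathrm{id}$, and Petz's sufficiency theorem hands you the recovery map in closed form, $\mathcal{D}(\cdot)=\rho_A^{1/2}\,\mathcal{E}^\dagger\bigl(\mathcal{E}(\rho_A)^{-1/2}(\cdot)\,\mathcal{E}(\rho_A)^{-1/2}\bigr)\rho_A^{1/2}$, which manifestly acts on $A'$ alone; your factorization check and the support caveats (generalized inverses, $\supp\rho_{A'B}\subseteq\supp\mathcal{E}(\rho_A)\otimes\supp\rho_B$ so the $B$-side factors act trivially) are the right things to say. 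What each approach buys: yours is shorter and yields an explicit formula for $\mathcal{D}$ with no ancillary system or completing unitary; the paper's dilation/Markov-chain framing makes the structural statement $I_{B:E|A'}=0$ explicit, which is precisely the form that plugs into the approximate-Markov-chain bound of Fawzi--Renner used in the paper's robustness statement (Eqs.~\eqref{eq: approx_correlation_preserving}--\eqref{eq: approx_correlation_preserving2}); note that the plain Petz map you construct does not by itself give that approximate version, so if you wanted the robust statement you would have to switch to the rotated/universal recovery maps, whereas the exact theorem as stated is fully covered by your argument. Your sketched alternative via purification and the saturation structure theorem is essentially the paper's route.
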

\begin{proof}
\noindent (reversibility $\Rightarrow$ correlation-preserving)
According to the quantum data processing inequality, a channel acting only on $A$ cannot increase correlations between $A$ and $B$:
\begin{equation}
    I_{A:B}(\rho)
    \geq
    I_{A':B}(\mathcal{E}(\rho))
    \geq
    I_{A:B}(\mathcal{D}\circ\mathcal{E}(\rho))
    = 
    I_{A:B}(\rho)
\end{equation}
Thus $I_{A:B}(\rho) = I_{A':B}(\mathcal{E}(\rho))$.

\noindent (correlation-preserving $\Rightarrow$ reversibility)
Let $W$ be an isometry from $A$ to $A'\cup E$ that dilates the channel $\mathcal{E}_{A\rightarrow A'}$:
\begin{equation}
    \mathcal{E}_{A\rightarrow A'}(\cdot) := \tr_{E}\left[W(\cdot)W^\dagger\right],
\end{equation}
where $E$ is an ancillary system, and let $\sigma_{A'EB}=W \rho W^\dagger$. Then we have the following relation:
\begin{equation}
    I_{A:B}(\rho)=I_{A'E:B}(\sigma_{A'EB})=I_{A':B}(\sigma_{A'EB}).
\end{equation}
The second equality, due to the correlation-preserving property, implies that $I_{B:E|A'}(\sigma_{A'EB})=0$ and $B-A'-E$ forms a quantum Markov chain. 
Thus there is a channel $\mathcal{T}_{A'\rightarrow A'E}$ that reconstructs $\sigma_{A'EB}$ from $\mathcal{E}_{A\rightarrow A'}(\rho)=\sigma_{A'B}=\tr_E\sigma_{A'EB}$ alone:
\begin{equation}
    \mathcal{T}_{A'\rightarrow A'E}(\sigma_{A'B}) = \sigma_{A'EB}.
\end{equation}
The map $\mathcal{T}_{A'\rightarrow A'E}$ is the Petz recovery map \cite{petz1988sufficiency}: 
\begin{equation}
    \mathcal{T}_{A'\rightarrow A'E}(\cdot) := \sigma_{A'E}^{1/2}\left( \sigma^{-1/2}_{A'}(\cdot)\sigma^{-1/2}_{A'}\otimes \identity_E \right)\sigma_{A'E}^{1/2}
\end{equation}
We can then choose the inverse channel $\mathcal{D}$ to be 
\begin{equation}\label{eq:recovery_map}
    \mathcal{D}_{A'\rightarrow A}(\cdot) = \tr_R\left(U_{W}^\dagger\mathcal{T}_{A'\rightarrow A'E}(\cdot)U_W\right),
\end{equation}
where $U_W:~A\cup R\rightarrow A'\cup E$ is a unitary operator that `completes' the isometry $W:~A\rightarrow A'\cup E$, namely:
\begin{equation}
    W(\cdot)W^\dagger = U_W\left((\cdot)\otimes\ket{0}_R\bra{0}\right)U^\dagger_W.
\end{equation}
\end{proof}

We remark that the relation between correlation-preserving and reversibility is robust in one direction. More precisely, if the channel $\mathcal{E}_{A\rightarrow A'}$ almost preserves correlation
\begin{equation}\label{eq: approx_correlation_preserving}
    I_{A:B}(\rho)-I_{A':B}(\mathcal{E}_{A\rightarrow A'}(\rho))=\epsilon, 
\end{equation}
then there exists an almost perfect recovery channel $\mathcal{D}_{A'\rightarrow A}$ such that
\begin{equation}\label{eq: approx_correlation_preserving2}
    F(\rho, \mathcal{D}\circ\mathcal{E}(\rho))\geq 2^{-\epsilon/2}.
\end{equation}
A proof, based on approximate quantum Markov chains \cite{fawzi2015quantum}, can be found in App.\ref{app: approx_correlation_preserving}. 
The robustness property is desirable especially when we would like to numerically search for the correlation-preserving channel $\mathcal{E}$.

When using $\mathcal{E}$ for the purpose of coarse-graining, the target space of the channel should be as small as possible.
This corresponds to solving the following optimization problem:
\begin{equation} \label{eq: mixed_optimization}
\begin{aligned}
    &\argmin_{\mathcal{E}_{A\rightarrow A'}} \, \dim\hilbert_{A'}\\
    s.t.~~~
    &I_{A:\bar{A}}(\rho) -I_{A':\bar{A}}(\mathcal{E}(\rho)) \leq \epsilon
\end{aligned}
\end{equation}
with $\epsilon$ taken to be a small number or zero. The problem is analogous to Eq.\eqref{eq: pure_optimization} for pure states, which has Eq.\eqref{eq: white_rule} as an explicit solution. The current problem, in contrast, has no known explicit solution.
In fact, the problem is closely related to the mixed-state quantum data compression problem, which is under active exploration in quantum information theory. We refer interested readers to Refs.~\cite{anshu2021incompressibility, khanian2022general, khanian2022strong, kato2023exact} for recent discussions on the problem.

To search for a good coarse-graining map for a given state, one can either numerically solve the optimization problem Eq.\eqref{eq: mixed_optimization} (in this case the robustness property is crucial for the purpose of estimating error), or try to construct the channel analytically by exploiting the special structure of the given state, as we do later when studying examples in Sec.\ref{sec: examples}. 

We point out that two familiar coarse-graining schemes in 1D, one for quantum ground states and one for classical statistical mechanics models, are in fact correlation-preserving maps. The first one is the Hilbert space truncation  reviewed in Sec.\ref{sec: pure_to_mixed_rg} using the rule Eq.\eqref{eq: white_rule}. Since this scheme preserves the entropy of a block, it satisfies the correlation-preserving condition (for a pure state $\ket{\psi_{AB}}$, $S_A = \frac{1}{2}I_{A:B}$). The other example is Kadanoff's block spin decimation of classical spin chains. 
Consider the Gibbs state of a classical spin chain with nearest-neighbor interaction, but written as a quantum mixed-state:
\begin{equation}
    \rho_\beta \propto \sum_{\textbf{s}=s_0...s_L}\exp\left(-\beta\sum_i h_i(s_i, s_{i+1})\right)\ket{\textbf{s}}\bra{\textbf{s}}
\end{equation}
The state is classical because it is diagonal in the computational basis $\ket{\bs}=\ket{s_1...s_L}$. 
For each block $B=\{i_1, ..., i_b\}$, the block spin decimation corresponds to a quantum channel that traces out all spins in $B$ other than $i_1$. 
This operation is correlation-preserving with respect to $B' := B\cup \{i_{b+1}\}$, because:
\begin{equation}
    I_{B':\overline{ B'}}(\rho) = 
    I_{\{i_1, i_{b}+1\}: \overline{ B'}}(\rho)
\end{equation}
which is a consequence of the Markov property of the Gibbs distribution.

\subsection{Ideal mixed-state RG}\label{sec: iRG}
In this section, we formulate an ideal real-space RG scheme built from local correlation-preserving channels.

Assume $\rho$ is a many-body mixed-state on a lattice with linear size $L$. We further assume that we have constructed, either numerically or analytically, a series of coarse-graining transformations $\{\mathcal{C}^{(0)}, \mathcal{C}^{(1)},...,\mathcal{C}^{(\ell)},...\}$ acting on $\rho$ sequentially. In 1D, each $\mathcal{C}^{(\ell)}$ may have one of the structures shown in Fig.~\ref{fig: rg_circuits}, or any other structure as long as it is composed of at most $O(1)$-layer of local channels. 
This leads to an `RG flow' of mixed-states:
\begin{equation}
\rho=\rho^{(0)}\xrightarrow{\mathcal{C}^{(0)}}\rho^{(1)}\xrightarrow{\mathcal{C}^{(1)}}...\xrightarrow{\mathcal{C}^{(\ell-1)}}\rho^{(\ell)}\xrightarrow{\mathcal{C}^{(\ell)}}...
\end{equation}
along which the level of coarse-graining increases gradually. 

Each state $\rho^{(\ell)}$ is supported on a coarse-grained lattice $\mathcal{L}^{(\ell)}$ with an $L^{(\ell)}:=L/b^{\ell}$ linear size. The chain has a length at most $\sim \log_b L$, after which the state is supported on $O(1)$ number of sites.

We call this RG process \textit{ideal} if every channel gate $\mathcal{E}$ within each $\mathcal{C}^{(\ell)}$ is correlation-preserving with respect to its input and the prescribed bipartition.

As a direct consequence of Thm.\ref{thm:equivalence}, ideal RG is reversible. More specifically, there exists a series of local `fine-graining' transformations $\{\mathcal{F}^{(0)}, \mathcal{F}^{(1)},...,\mathcal{F}^{(\ell)},...\}$ that recovers the original mixed-state from its coarse-grained version by gradually adding local details:
\begin{equation}
\rho^{(0)}\xleftarrow{\mathcal{F}^{(0)}}\rho^{(1)}\xleftarrow{\mathcal{F}^{(1)}}...\xleftarrow{\mathcal{F}^{(\ell-1)}}\rho^{(\ell)}\xleftarrow{\mathcal{F}^{(\ell)}}...
\end{equation}
where each $\mathcal{F}^{(\ell)}$ is the `reversed' channel of $\mathcal{C}^{(\ell)}$, obtained by replacing each channel $\mathcal{E}$ within $\mathcal{F}^{(\ell)}$ by its corresponding recovery map $\mathcal{D}$ (see Thm.\ref{thm:equivalence}). In graphical notation, if 
\begin{equation}
        \mathcal{C}^{(\ell)} = \left(\eqfig{.8cm}{fig/fig_eq_forward}\right)
\end{equation}
then
\begin{equation}
    \mathcal{F}^{(\ell)} := \left(\eqfig{.8cm}{fig/fig_eq_backward}\right)
\end{equation}
In both plots, the state ($\rho^{(\ell)}$ for $\mathcal{C}^{(\ell)}$ and $\rho^{(\ell+1)}$ for $\mathcal{F}^{(\ell)}$), is inserted from the bottom.

Coarse-graining and fine-graining maps also establish relations between operators at different coarse-graining levels. Let $O^{(\ell)}$ be any operator (not necessarily local) defined on the lattice $\mathcal{L}^{(\ell)}$. Then the following relation holds:
\begin{equation}
\begin{aligned}
    \tr\left(\rho^{(\ell)} O^{(\ell)}\right)
    &=
    \tr\left(\mathcal{F}^{(\ell)}\circ\mathcal{C}^{(\ell)}(\rho^{(\ell)}) O^{(\ell)}\right)\\
    &=
    \tr\left(\rho^{(\ell+1)}\mathcal{F}^{\dagger (\ell)}( O^{(\ell)})\right),
\end{aligned}
\end{equation}
where $\mathcal{F}^{\dagger}$ is $\mathcal{F}$'s dual map, defined through the relation $\tr(A\cdot\mathcal{F}(B))\equiv\tr(\mathcal{F}^{\dagger}(A)\cdot B)$. 
Thus we can define the coarse-grained operator of $O^{(\ell)}$ through:
\begin{equation}
    O^{(\ell+1)} := \mathcal{F}^{\dagger (\ell)}( O^{(\ell)})
\end{equation}
so that: 
\begin{equation}
    \langle{O^{(\ell)}}\rangle^{(\ell)} = \langle{O^{(\ell+1)}}\rangle^{(\ell+1)}
\end{equation}
where $\langle{~\cdot~}\rangle^{(\ell)} := \tr(\rho^{(\ell)}(~\cdot~))$.

It is worth noting that $\mathcal{F}^\dagger=\mathcal{F}^{(\ell)\dagger}$ does not preserve operator multiplication: if $O^{(\ell)} = o_1^{(\ell)} o_2^{(\ell)}$ then it is possible that $\mathcal{F}^\dagger(O^{(\ell)})=O^{(\ell+1)}\neq o_1^{(\ell+1)} o_2^{(\ell+1)}=\mathcal{F}^\dagger(o_1^{(\ell)})\mathcal{F}^\dagger(o_2^{(\ell)})$. 
However, if two operators $o_1$ and $o_2$ are spatially well-separated, then the multiplication is preserved:
\begin{equation}\label{eq: operator_product}
    \mathcal{F}^{\dagger}(o_1^{(\ell)} o_2^{(\ell)}) = \mathcal{F}^{\dagger}(o_1^{(\ell)}) \mathcal{F}^{\dagger}(o_2^{(\ell)})
\end{equation}
Here `well-separated' means that the lightcones for $o_1$ and $o_2$, as determined by the circuit structure of $\mathcal{C}^{(\ell)}$ (or equivalently that of $\mathcal{F}^{(\ell)\dagger}$), are non-overlapping. We illustrate the definition of the lightcone and a proof of the above equation in App.\ref{app: operator_product}. 
This guarantees that long-distance behavior of all the $k$-point functions are preserved along an ideal RG:
\begin{equation}
    \langle o_{1}^{(\ell)}o_{2}^{(\ell)}...o_{k}^{(\ell)}\rangle^{(\ell) }
    =
    \langle o_{1}^{(\ell+1)}o_{2}^{(\ell+1)}...o_{k}^{(\ell+1)}\rangle^{(\ell+1)}
\end{equation}
where $\{o_i\}$ are mutually well-separated local operators.

\subsection{From mixed-state RG to mixed-state quantum phases}\label{sec: rg_to_phase}

In this section we discuss how to use RGs, both ideal and non-ideal ones, to furnish the two-way LC transformations required to establish the phase equivalence of two mixed states (Def.\ref{def: phase_equivalence}). 

Assume that for the state of interest $\rho$, we have found a (not necessarily ideal) real-space RG process $\{\mathcal{C}^{(1)}, \mathcal{C}^{(2)},...,\mathcal{C}^{(\ell)},...\}$. We further assume that the RG has a well-defined fixed-point state $\rho^{(\infty)}$, whose mixed-state phase of matter is presumably easy to identify.

Intuitively, this RG can be treated as an LC transformation connecting $\rho$ to $\rho^{(\infty)}$ \footnote{When discussing LC transformations in this section, we fix the reference lattice as the one that $\rho=\rho^{(0)}$ is defined upon. The renormalized state $\rho^{(\ell)}$ can be considered as supported on a sub-lattice with $L/b^{\ell}$ sites.}. 
To rigorously show this according to Defs.\ref{def: lc_transformation} and \ref{def: phase_equivalence}, one needs to show that the sequence $\{\rho^{(\ell)}\}$ converges toward $\rho^{(\infty)}$ fast enough. 
More concretely, we need to show that there exists $\ell^*$ such that:
\begin{enumerate}
    \item The channel $\mathcal{C}^{(\ell^*)}\circ...\circ\mathcal{C}^{(2)}\circ\mathcal{C}^{(1)}$ is an LC transformation. Noticing that $\mathcal{C}^{(\ell)}$ entails $b^{\ell}$ range of non-locality while an LC transformation can have at most $o(L)$ range, the condition is equivalent to requiring $\ell^{*} \lesssim \log L^\alpha$, for some $\alpha<1$. 
    \item The state after $\ell^*$ iterations is close enough to $\rho^{(\infty)}$, namely $F(\rho^{(\ell^*)}, \rho^{(\infty)})>1-\epsilon$ for a small $\epsilon$.
\end{enumerate}

We find that such an $\ell^*$ does exist in many cases when the fixed-point state $\rho^{(\infty)}$ has a finite correlation length.
More specifically, in such cases, the fidelity function satisfies the form:
\begin{equation}\label{eq: convergence_condition1}
    F(\rho^{(\ell)}, \rho^{(\infty)}) \simeq \exp(-\alpha\ \theta^{(\ell)} L^{(\ell)})
\end{equation}
for some $\alpha=O(1)$ and a positive coefficient $\theta^{(\ell)}$.  Further, $\theta^{(\ell)}$ displays a power-law iteration relation under each coarse-graining step:
\begin{equation}\label{eq: convergence_condition2}
     \theta^{(\ell+1)} \lesssim (\theta^{(\ell)}) ^ {\gamma} ~~~{\rm when}~~~ \theta^{(\ell)}\rightarrow 0_+
\end{equation}
for some coefficient $\gamma>1$.  

As detailed in the App.\ref{app: convergence}, Eqs.\eqref{eq: convergence_condition1},\eqref{eq: convergence_condition2} guarantee that choosing
\begin{equation}
    \ell^* \sim \log\log (L/\epsilon)
\end{equation}
is sufficient to have $F(\rho^{(\ell)}, \rho^{(\infty)})>1-\epsilon$. 
We remark that one can let $\epsilon$ be as small as $(\poly L)^{-1}$ but still guarantee that $\ell^*$ steps of RG is a $(\polylog L)$ -range LC transformation.

In the App.\ref{app: convergence} we show that conditions Eq.\eqref{eq: convergence_condition1} and Eq.\eqref{eq: convergence_condition2} hold for: 
\begin{itemize}
    \item 1D pure-state RG of a matrix product state
    \item Gibbs state of a classical statistical mechanics model flowing toward a non-critical fixed point
    \item All examples we study in Sec.\ref{sec: examples}
\end{itemize}

So far in this section, we have shown that RG can be viewed as an LC transformation connecting $\rho$ to $\rho^{(\infty)}$.
Recalling that the phase equivalence is defined through two-way LC connections, we have to find another LC channel connecting $\rho^{(\infty)}$ to $\rho$ to conclude that the two states are in the same phase.

If the RG is an ideal one, it is composed of correlation-preserving channels and thus reversible. In this case, the other direction comes from the `reversed' RG process  ${\rm RG}^{-1}=\{\mathcal{F}^{(1)}, \mathcal{F}^{(2)},...\mathcal{F}^{(\ell)},...\}$ which we discussed in Sec.\ref{sec: iRG}.  Similar to the forward RG, there is the issue of convergence concerning whether ${\rm RG}^{-1}$ can be treated as an LC transformation. But the discussion is completely parallel to the one for the forward RG. 
Thus in this case, the LC bi-connection is established as
\begin{equation}
    \rho 
    ~\xrightarrow{{\rm RG}}~
    \rho^{(\infty)} 
    ~\xrightarrow{{\rm RG}^{-1}}~
    \rho
\end{equation}
and we can conclude $\rho$ and $\rho^{(\infty)}$ are in the same phase. 

Next we discuss what we can learn from a non-ideal RG. 
One class of mixed states of significant interest is a long-range entangled pure-state $\ket{\psi}$ subject to local decoherence represented by an LC transformation, and an important question is whether or not the decohered state in the same phase as $\ket{\psi}$. 
In this setting, one direction of the connection is already given by the decoherence. Therefore, if the RG (ideal or not) has $\ket{\psi}$ as the fixed point:
\begin{equation}
    \ket{\psi} \xrightarrow{\rm decohere} \rho \xrightarrow{\rm RG} \rho^{(\infty)}= \ket{\psi}\bra{\psi},
\end{equation}
then an LC bi-connection is established and $\rho$ and $\ket{\psi}$ are in the same phase. 
But on the other hand, if the fixed-point is not in the same phase as $\ket{\psi}$, then we cannot determine $\rho$'s phase because no bi-connection is identified.

\section{Overview of examples} \label{sec: examples}
In the remaining sections we use our formalism to understand the quantum phases of several many-body mixed-states of recent interest. 

In all the examples, the mixed-state is obtained by `perturbing' a long-range entangled pure state, either through incoherent noise or finite temperature. The question we address is whether the states before and after the perturbation are in the same phase.

The long-range entangled pure state is chosen to be either the Greenberger–Horne–Zeilinger (GHZ) state or Kitaev's toric code state. 
In most examples, the LC circuits for identifying phases take the form of RG. The coarse-graining maps therein are either constructed according to the correlation-preserving criterion (Def.\ref{def: correlation-preserving}), or inspired by decoders of quantum error correcting codes.

In the App.\ref{app: spt_rg} we include an example of a mixed symmetry-protected topological (SPT) state and its associated mixed-state RG.

\section{Noisy GHZ states}\label{sec: noisy_ghz}
The many-body GHZ state, defined as
\begin{equation}
    \ket{\GHZ_L} := \frac{1}{\sqrt{2}}\left( \ket{0^{\otimes L}} +\ket{1^{\otimes L}}\right),
\end{equation}
has long-range entanglement, \textit{i.e.} it can not be generated from a product state using any one-dimensional LU (or LC) transformation from a product state. 
 In this section, we study the effect of dephasing noise on this state. 

For convenience of analysis, we let $L = b^{\ell_{\rm max}}$ for some integer $\ell_{\rm max}$ and an odd integer $b$. The state can be rewritten as:
\begin{equation}\label{eq: ghz_tn_rep}
    \ket{\GHZ_L} = w_b^{\otimes b^{\ell_{\rm max}-1}} \cdot w_b^{\otimes b^{\ell_{\rm max}-2}} ... ~w_b^{\otimes b^1} \ket{+} 
\end{equation}
where $w_b=\ket{0^{\otimes b}}\bra{0} + \ket{1^{\otimes b}}\bra{1}$ is an isometry and $\ket{+}=\frac{1}{\sqrt{2}}(\ket{0}+\ket{1})$. This provides a tree tensor network representation of the state (see Fig.\ref{fig:ghz_tn}), as well as a way of blocking sites when performing RG.

\begin{figure}[h]
\centering
\includegraphics[width=.4\linewidth]{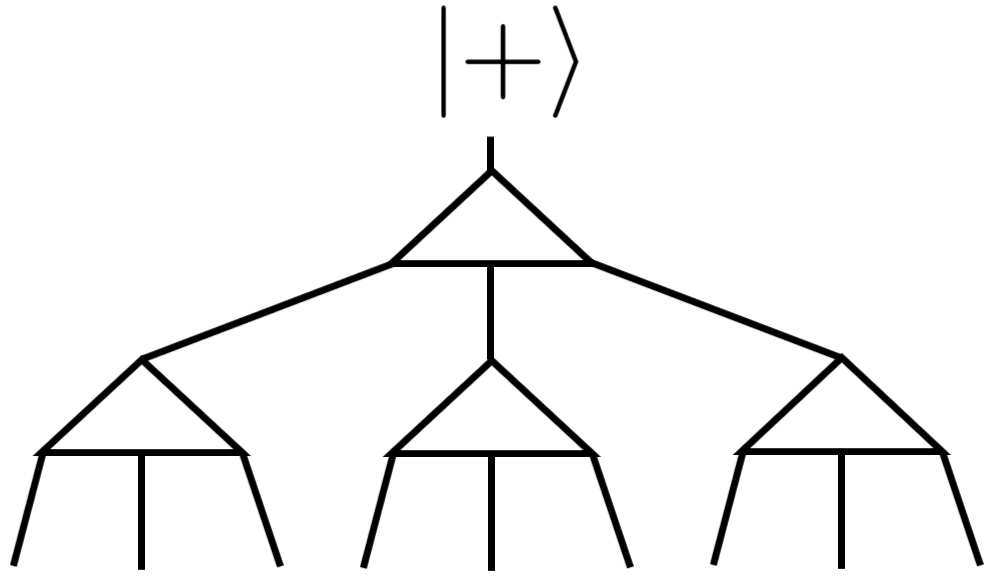}
\caption{Tree tensor network of a GHZ state with $L=b^\ell=9$, $b=3$, $\ell=2$. Each triangle represents an isometry $w$ (Eq.\eqref{eq: ghz_tn_rep}). By replacing the state at the top with a generic single qubit state $\ket{\psi}$, the same tensor network encodes $\ket{\psi}$ into a codeword state of the quantum repetition code.}
\label{fig:ghz_tn}
\end{figure}

We consider a setting in which each qubit experiences the same noise, as modeled by a single qubit channel $\mathcal{N}$, resulting in the mixed state
\begin{equation} \label{eq: ghz_noisy_state_general}
\rho_{L} := \mathcal{N}^{\otimes L}(\ket{\GHZ_L}\bra{\GHZ_L})\\
\end{equation}

We remark that the GHZ state is closely related to the quantum repetition code, whose codespace is spanned by $\ket{0^{\otimes L}}$ and $\ket{1^{\otimes L}}$. 
It is known that quantum information stored in a quantum repetition code is robust against bit-flip noise ($X$ dephasing noise), but not phase-flip noise ($Z$ dephasing noise). As we will see in this section, the robustness of the GHZ state's long-range entanglement has a parallel behavior when it is subjected to these two types of noise. 
We postpone a detailed discussion of the relation between mixed-state phases and quantum coding properties to Sec.\ref{sec: noisy_tc}. 

\subsection{Bit-flip noise}
We first consider dephasing each qubit in the $X$ direction:
\begin{equation}\label{eq: X_dephase}
    \mathcal{N}(\cdot) = \mathcal{N}^X_p(\cdot) := (1-p)(\cdot)+p X(\cdot) X,
\end{equation}
in which each qubit is flipped with probability $p$.

The resulting state is 
\begin{equation}
    \rho^X_{p, L} 
    = \frac{1}{2}\sum_{\bs} p^{|\bs|}(1-p)^{L-|\bs|}
    (
    \ket{\bs}\bra{\bs} + 
    \ket{\bs}\bra{\bar \bs} + 
    \ket{\bar\bs}\bra{\bs} + 
    \ket{\bar\bs}\bra{\bar\bs}
    )
\end{equation}
where $|\bs|:=\sum_i s_i$ is the number of $1$ in the bitstring $\bs$, and $\bar \bs$ is the bitwise complement of $\bs$ . Since $\rho^{X}_p = \rho^{X}_{1-p}$, we only consider $p\in(0,0.5]$.

Inspired by decoders for the quantum repetition code, we use the $b$-qubit \textit{majority-vote} channel as the coarse-graining map. 
To define it, we first introduce the unitary operator that re-parametrizes the bitstring:
\begin{equation}
    U\ket{\bs} := \ket{\mathsf{maj}(\bs)}\otimes\ket{\mathsf{diff}(\bs)}
\end{equation}
where $\mathsf{maj}(\bs)$ takes the majority vote of the bits within $\bs$:
\begin{equation}
    \mathsf{maj}(\bs) :=
    \left\{
    \begin{array}{ll}
    0     &  {\rm if}~~ |\bs|<b/2\\
    1     &  {\rm if}~~ |\bs|\geq b/2
    \end{array}
    \right.
\end{equation}
and $\mathsf{diff}(\bs)$ is a length $(b-1)$ bitstring that records pairwise difference of $\bs$:
\begin{equation}
    \mathsf{diff}(\bs)_i := (s_{i+1}-s_i)\mod 2~~~i=1,2,...,b-1
\end{equation}
Then the majority vote channel can be written as:
\begin{equation} \label{eq: majority_vote}
    \mathcal{E}_b(\cdot) := \tr_{2}(U(\cdot)U^\dagger)
\end{equation}
where $\tr_2$ denotes tracing out the pairwise difference information, regarded as unimportant short-distance degrees of freedom in the current example. 

We inspect the state's RG flow under $\mathcal{E}_b$, the coarse-graining map:
\begin{equation}
\begin{aligned}
    &\mathcal{E}_b^{\otimes {L/b}}(\rho^X_{p, L})\\
    =& \mathcal{E}_b^{\otimes {L/b}} \circ (\mathcal{N}^X_p)^{\otimes L} (\ket{\GHZ_L}\bra{\GHZ_L})\\
    =& \left(\mathcal{E}_b \circ (\mathcal{N}^X_{p})^{\otimes b} \circ \mathcal{U}_{w_b}\right)^{\otimes L/b}\left(\ket{\GHZ_{L/b}}\bra{\GHZ_{L/b}}\right)
\end{aligned}
\end{equation}
where $\mathcal{U}_{w_b}(\cdot):=w_b(\cdot)w_b^\dagger$ and we applied the relation $\ket{\GHZ_L}=w_b^{\otimes L/b}\ket{\GHZ_{L/b}}$. 
Thus after one iteration of coarse-graining, the resulting state is a GHZ state with $1/b$ of the original size subject to a `renormalized' noise channel, which is still $X$-dephasing  (see App.\ref{app: ghz_iteration} for a derivation):
\begin{equation}\label{eq: ghz_dephase_channel_iteration}
    \mathcal{E}_b \circ (\mathcal{N}_{p}^X)^{\otimes b} \circ \mathcal{U}_{w_b} = \mathcal{N}_{p'}^X,
\end{equation}
but with a renormalized noise strength $p' = \sum_{k=(b+1)/2}^{b} \binom{b}{k}p^{k}(1-p)^{b-k}$ .


Thus we obtain an exact description of the state's RG flow:
\begin{equation}
    \rho^{(\ell)} = (\mathcal{N}^X_{p^{(\ell)}})^{\otimes L^{(\ell)}}(\ket{\GHZ_{L^{(\ell)}}} \bra{\GHZ_{L^{(\ell)}}})
\end{equation}
where $L^{(\ell)}=L b^{-\ell}$ is the renormalized system size at the $\ell$-th iteration, and $ p^{(\ell+1)} = \sum_{k=(b+1)/2}^{b} \binom{b}{k}(p^{(\ell)})^{k}(1-p^{(\ell)})^{b-k}$. It is straightforward to check that $p=0$ and $p=1/2$ are the two fixed points of the RG transformation. 

Around $p=1/2$, the iteration relation has the asymptotic behavior:
\begin{equation}
    (p'-1/2) \simeq g(b) (p-1/2)
\end{equation}
where $g(b):=2^{-(b-1)}\sum_{k=(b+1)/2}^b (2k-b)\cdot\binom{b}{k}>1$. Thus this is an unstable fixed point.
Exactly at $p=1/2$, the fixed-point state is:
\begin{equation}
    \rho^X_{1/2,~L} = \frac{1}{2^L}\sum_{\bs\in\{0,1\}^L} \left(\ket{\bs}\bra{\bs} + \ket{\bs}\bra{\bar \bs}\right).
\end{equation}
The state is better understood in the eigenbasis of Pauli $X$ operators, \textit{i.e.} $\{\ket{0^X}=\frac{1}{\sqrt 2}(\ket{0}+\ket{1}),$~$\ket{1^X}=\frac{1}{\sqrt 2}(\ket{0}-\ket{1})\}$:
\begin{equation}
    \rho^X_{1/2,~L} = \frac{1}{2^{L-1}}\sum_{\bs\in\{0,1\}^L} \ket{\bs^X}\bra{\bs^X} ~\delta(|\bs|=0\mod 2)
\end{equation}
In this basis, the state is a uniform distribution of bitstrings with even parity. 
Since it is diagonal in this basis, the state is a classical state (recall the definition in Sec.\ref{sec: phase}) and is not in the same phase as the GHZ state \footnote{In fact, the state $\rho^{X}_{1/2,L}$ is in the same phase as the product state. One may prepare the state using LC by first preparing a one dimensional cluster state of $2L$ spins and then tracing out qubits on all odd sites.}.

On the other hand, $p=0$ is a stable fixed-point attracting $p\in [0,0.5)$. 
Starting from any state in the interval, the RG process gradually removes entropy within the state and brings it back to the noiseless state at $p=0$, \textit{i.e.} $\ket{\GHZ}$. 
We thus obtain the following LC transformation bi-connection:
\begin{equation} \label{eq: ghz_X_biconnection}
    \ket{\GHZ} 
    ~\xrightarrow{\mathcal{N}_X}~
    \rho^X_p 
    ~\xrightarrow{\rm RG}~ 
    \ket{\GHZ} \quad\quad p\in[0,0.5)
\end{equation}
Thus we can conclude $\rho_p^X$ and $\ket{\GHZ}$ are in the same phase. The analysis shows that the $X-$dephasing noise acts as an irrelevant perturbation with respect to the $\ket{\GHZ}$ state and its long-range entanglement.

Besides establishing the phase equivalence, the bi-connection in Eq.\eqref{eq: ghz_X_biconnection} also yields information on the entanglement structure of the dephased state $\rho_p^X$. 
Consider two sufficiently large subregions of the system, referred to as $A$ and $B$, for which one can always choose the RG blocking scheme such that coarse-graining channels never act jointly on $A$ and $B$. Let $E_{A:B}(\cdot)$ be \textit{any} quantum or classical correlation measure between $A$ and $B$ that satisfies the data processing inequality. Then due to the bi-connection we have:
\begin{equation}
\begin{aligned}
    & E_{A:B}(\ket{\GHZ}) \geq E_{A:B}(\rho_p^X) \geq E_{A:B}(\ket{\GHZ})\\
    \Rightarrow~
    & E_{A:B}(\rho_p^X) = E_{A:B}(\ket{\GHZ}).
\end{aligned}
\end{equation}

Some examples of correlation measures are quantum mutual information, entanglement negativity, and entanglement of formation \& distillation. 
All quantities are easy to compute analytically for the GHZ state but are difficult to obtain for the mixed-state $\rho_p^X$ by other means. 

We point out that all conclusions in this section hold also for the dephasing noise along other directions in the $X$-$Y$ plane. This can be most easily seen by noticing that the expression Eq.\eqref{eq: ghz_dephase_channel_iteration} holds for any dephasing direction in the $X$-$Y$ plane. 
As we will see in the next subsection, the $Z$-dephasing acts very differently.

\subsection{Phase-flip noise}
Next we consider the GHZ state under another type of noise, namely the phase-flip or Z-dephasing:
\begin{equation}
    \mathcal{N}^Z_p(\cdot) := (1-p)(\cdot)+p Z(\cdot) Z
\end{equation}
which leads to the density matrix
\begin{equation}
\begin{aligned}
    \rho^Z_{p, L} 
    = 
    &\frac{1}{2}[
    \ket{0^{\otimes L}}\bra{0^{\otimes L}}+
    \ket{1^{\otimes L}}\bra{1^{\otimes L}}+\\
    &
    (1-2p)^L(\ket{0^{\otimes L}}\bra{1^{\otimes L}}+
    \ket{1^{\otimes L}}\bra{0^{\otimes L}})
    ]
\end{aligned}
\end{equation}
In the thermodynamic limit, the off-diagonal term vanishes for any $p\notin\{0,1\}$ and the state converges to a classical state $ \frac{1}{2} (\ket{0^{\otimes L}}\bra{0^{\otimes L}}+\ket{1^{\otimes L}}\bra{1^{\otimes L}})$. This already indicates that the state is in a different phase from the GHZ state. 

To construct the RG for this mixed state, we still use the majority-vote channel $\mathcal{E}_b$ (Eq.\eqref{eq: majority_vote}) as the coarse-graining map.  An important difference of this case compared to the bit-flip case is that the majority vote channel is now correlation-preserving with respect to $\rho^Z_{p,L}$. 
To see this, we first verify the following relations:
\begin{equation}
    \begin{aligned}
        \mathcal{E}_b \circ \mathcal{U}_{w_b} &= \mathcal{I}\\
        (\mathcal{N}_p^Z)^{\otimes b} \circ \mathcal{U}_{w_b} &= \mathcal{U}_{w_b} \circ \mathcal{N}_{p'}^Z
    \end{aligned}
\end{equation}
where $\mathcal{I}$ is the identity channel and $p'$ is given later in Eq.\eqref{eq: ghz_z_p_iteration}. These equations imply that:
\begin{equation}
    \mathcal{U}_{w_b} \circ \mathcal{E}_{b}\left(\rho_{p, L}^Z\right) = \rho_{p, L}^Z
\end{equation}
where $\mathcal{U}_{w_b} \circ \mathcal{E}_{b}$ is applied to any block of $b$ sites. Thus $\mathcal{E}_b$ is reversible and correlation-preserving with respect to $\rho_{p, L}^Z$.

Following a similar calculation as in the bit-flip noise case, we obtain that the state after one step of RG maintains the same form:
\begin{equation}
    \mathcal{E}_b^{\otimes L/b} (\rho^{Z}_{p,L}) = \rho^{Z}_{p',L/b}
\end{equation}
but with a renormalized noise strength
\begin{equation} \label{eq: ghz_z_p_iteration}
    p' = \frac{1}{2}(1 - (1-2p)^b),
\end{equation}

The iteration relation has $p=0$ as an unstable fixed point, around which 
    $p' \simeq bp$, and also $p = 1/2$ as a stable fixed point, around which $(p'-1/2) = (p-1/2)^b$.

Since the RG is ideal, it leads to the following LC  bi-connection:
\begin{equation}
    \rho^Z_{1/2} 
    ~\xrightarrow{{\rm RG}^{-1}}~
    \rho^Z_p 
    ~\xrightarrow{\rm RG}~
    \rho^Z_{1/2} \quad\quad p\in[0,0.5)
\end{equation}
and the analysis shows the noisy state is of the same phase as the classical state $\frac{1}{2}(\ket{0^{\otimes L}}\bra{0^{\otimes L}}+\ket{1^{\otimes L}}\bra{1^{\otimes L}})$.  Therefore, for the GHZ state the phase flip noise is relevant and destroys the long-range entanglement therein with an arbitrarily small strength.

\section{Thermal toric code state}\label{sec: thermal_tc_total}
In this section and the next, we discuss two mixed-states related to $\mathbb{Z}_2$ topological order. In Sec.\ref{sec: tc_review} we review key properties of the toric code model and define the notations. In Sec.\ref{sec: thermal_tc} we construct an ideal RG to explicitly show that any finite temperature Gibbs state of the toric code is in the trivial phase.

\subsection{Review of the toric code model}\label{sec: tc_review}
We consider a square lattice with periodic boundary conditions and qubits on the links.
Kitaev's toric code model has the Hamiltonian
\begin{equation} \label{eq: tc_hamiltonian}
    H = -\sum_{\square\in P} A_\square - \sum_{+\in V} B_+
\end{equation}
where $A_\square = \prod_{i\in\square} X_i$ and $B_+ =\prod_{i\in +} Z_i$. $P, V$ represent plaquettes and vertices, respectively.

Since all terms in the Hamiltonian commute with each other, their common eigenstates can be used to label the Hilbert space. But in order to construct a complete basis, we need two more operators $\tilde{X}_{1,2}=\prod_{i\in S_{1,2}} X_i$, where $S_{1}, S_2$ are the two homotopically inequivalent non-contractable loops on the torus. 
Each $\tilde X_i$ commutes with $A_\square$s and $B_+$s, thus all of them together define a basis for the Hilbert space:
\begin{equation}\label{eq: anyon_basis}
\begin{aligned}
    \ket{\bm=m_1...m_{|P|}; \be=e_1...e_{|V|}; \bl=l_1 l_2},\quad m_i,e_i,l_i\in\{0,1\}
\end{aligned}
\end{equation}
satisfying
\begin{equation}
    \begin{aligned}
        A_{\square_i}\ket{\bm; \be; \bl} &= (-1)^{m_i}\ket{\bm; \be; \bl}\\
        B_{+_i}\ket{\bm; \be; \bl} &= (-1)^{e_i}\ket{\bm; \be; \bl}\\
        \tilde X_i\ket{\bm; \be; \bl} &= (-1)^{l_i}\ket{\bm; \be; \bl}
    \end{aligned}
\end{equation}
We call this the anyon number basis in contrast to the computational basis. If $m_i=1$, there is a plaquette anyon (or $m$ anyon) at the corresponding plaquette; while if $e_i=1$, there is a vertex anyon (or $e$ anyon) at the corresponding vertex. The operator identities $\prod_\square A_\square = 1$ and $\prod_+ B_+ = 1$ enforce that the total number of either type of anyon must be even:
\begin{equation}
        \pi(\bm) = 0 \quad \pi(\be) = 0,
\end{equation}
where the function $\pi(\cdot)$ evaluates the total parity of a bit string, \textit{i.e.} $\pi(\bs) := (\sum_i s_i \mod 2)$.

The Hamiltonian's 4-dimensional ground state subspace $V$ is spanned by anyon-free states:
\begin{equation}\label{eq: tc_V}
    V := \mathsf{span}\{ ~\ket{\bm=\textbf{0}; ~\be=\textbf{0}; ~\bl}:~~ \bl\in\{00,01,10,11\} \}
\end{equation}
States within this subspace are locally indistinguishable, \textit{i.e.} $\rho_{A}=\tr_{\bar A}(\ket{\psi}\bra{\psi})$ is independent of $\ket{\psi}\in V$ whenever $A$ is a topologically trivial region. 


We define a mixed-state $\rho$ to be in the toric code phase if it is LC bi-connected to states within $V$, namely:
\begin{equation}
    \rho_a 
    ~\xrightarrow{\mathcal{C}_1}~
    \rho 
    ~\xrightarrow{\mathcal{C}_2}~
    \rho_b
\end{equation}
for some states $\rho_a, \rho_b$ within $V$, and some LC transformations $\mathcal{C}_1$, $\mathcal{C}_2$. 



\subsection{RG of the thermal toric code state} \label{sec: thermal_tc}
We consider the Gibbs state of the toric code model Eq.\eqref{eq: tc_hamiltonian} $\rho_\beta \propto \exp(-\beta H)$ at inverse temperature $\beta$.
Ref.~\cite{hastings2011nonzero} showed that this state for finite $\beta$ is not long-range entangled, and here we reproduce the conclusion by constructing an ideal mixed-state RG under which the state flows to a trivial one.

We notice that the density matrix $\rho_\beta$ is diagonal in the anyon number basis (Eq.~\eqref{eq: anyon_basis}):
\begin{equation}
    \rho_\beta\ket{\bm; \be; \bl} \propto \ket{\bm; \be; \bl},
\end{equation}
and is thus a classical mixture of different anyon configurations, with probabilities
\begin{equation}\label{eq: thermal_anyon_dist}
\begin{aligned}
    \Pr(\bm,\be,\bl) 
    :=& \bra{\bm; \be; \bl}\rho_\beta\ket{\bm;\be;\bl} \\
    =& \Pr_m(\bm)\Pr_e(\be)\Pr_l(\bl)
\end{aligned}
\end{equation}
in which the three types of degrees of freedom are independent: 
\begin{equation}\label{eq: thermal_tc_anyon_dist_separated}
\begin{aligned}
    \Pr_m(\bm) &= C_\beta ~\delta \left(\pi(\bm)=0\right)  \prod_i p_\beta^{m_i}(1-p_\beta)^{1-m_i}\\
    \Pr_e(\be) &= C_\beta ~\delta \left(\pi(\be)=0\right)  \prod_i p_\beta^{e_i}(1-p_\beta)^{1-e_i}\\
    \Pr_l(\bl) &= 1/4
\end{aligned}
\end{equation}
with $p_\beta=\frac{e^{-\beta}}{e^\beta + e^{-\beta}}$ and $C_\beta$  a normalization constant. 

The key property is that $m$ anyons on each plaquette (and $e$ anyons on vertices) are independently excited with probability $p_\beta$, up to a global constraint that the total number of each anyon type is even.
This allows us to find an ideal RG, as we need only preserve the local anyon parity $\pi(\bm_B), \pi(\be_B)$ of a block $B$ to maintain correlations between the block and its complement. 


We now describe how to coarse-grain to preserve this parity information.  Consider the following quantum channel acting on $12$ qubits in a $2\times 2$ block of plaquettes:
\begin{equation}\label{eq: theram_tc_E_X}
    \mathcal{E}^X(\cdot) := \sum_{\bm\in\{0,1\}^{\otimes 4}}U_{\bm}P_{\bm}(\cdot)P_{\bm}U_{\bm}^\dagger
\end{equation}
where $P_{\bm}$ is the projector to the subspace with anyon configuration $\bm$, and the unitary operator $U_\bm$ is a product of Pauli $Z$ matrices that brings $\ket{\bm=m_1m_2m_3m_4}$ to $\ket{\pi(\bm) 0 0 0 }$. 
For instance, if we assume plaquettes are labeled as ${\begin{smallmatrix}1&2\\3&4\end{smallmatrix}}$ and $\bm=0110$, then $U_{\bm}$ can be $Z_{12}Z_{13}$, where $Z_{12(13)}$ is the Pauli-Z matrix acting on the qubit separating 1 and 2 (1 and 3). We remark that $U_{\bm}$ only acts on the inner four qubits. 

In other words, $\mathcal{E}^X$ first measures the anyon configuration within the block and then applies a unitary gate depending on the measurement outcome that pushes all anyons to the top-left plaquette. Since $m$-anyon is its own anti-particle, the top-left plaquette ends up with $\pi(\bm)$ anyons while the other three end up with $0$. Importantly, neither step disturbs the distribution of $e$-anyons.

$\mathcal{E}^X$ is a correlation-preserving map with respect to the state $\rho_\beta$, and one can explicitly check that its action on $\rho_\beta$ can be reversed by the following channel:
\begin{equation}
    \mathcal{D}^X(\cdot) := \sum_{\bm} \Pr(\bm|\pi(\bm)) U^\dagger_\bm P^1_{\pi(\bm)}(\cdot)P^1_{\pi(\bm)}U_\bm
\end{equation}
where $P^1_{x}$ is the projector to the subspace with $m_1=x$. 
The action of $\mathcal{D}^X$ can be intuitively understood as follows: It first measures the anyon occupancy of the site $1$, which we recall is the only site that may host an anyon after the action of $\mathcal{E}^X$. Then based on the measurement outcome (referred to as $x$), it randomly generates an anyon configuration on the block according to the distribution $\Pr(\bm|\pi(\bm)=x)$.

Analogously, there is a channel for each $2\times 2$ block of vertices (\textit{i.e.} a block of plaquettes of the dual lattice) that coarse-grains $e$-anyons:
\begin{equation}\label{eq: theram_tc_E_Z}
    \mathcal{E}^Z(\cdot) := \sum_{\be\in\{0,1\}^{\otimes 4}}U_{\be}P_{\be}(\cdot)P_{\be}U_{\be}^\dagger
\end{equation}
where $P_\be$ are projector for $e$-anyon configurations and $U_\be$ a product of $X$ operators that brings $\ket{\be=e_1e_2e_3e_4}$ to $\ket{\pi(\be) 0 0 0 }$. $\mathcal{E}^Z$ only moves $e$-anyons and commutes with $\mathcal{E}^X$. 

\begin{figure}
    \centering
    \includegraphics[width=.95\linewidth]{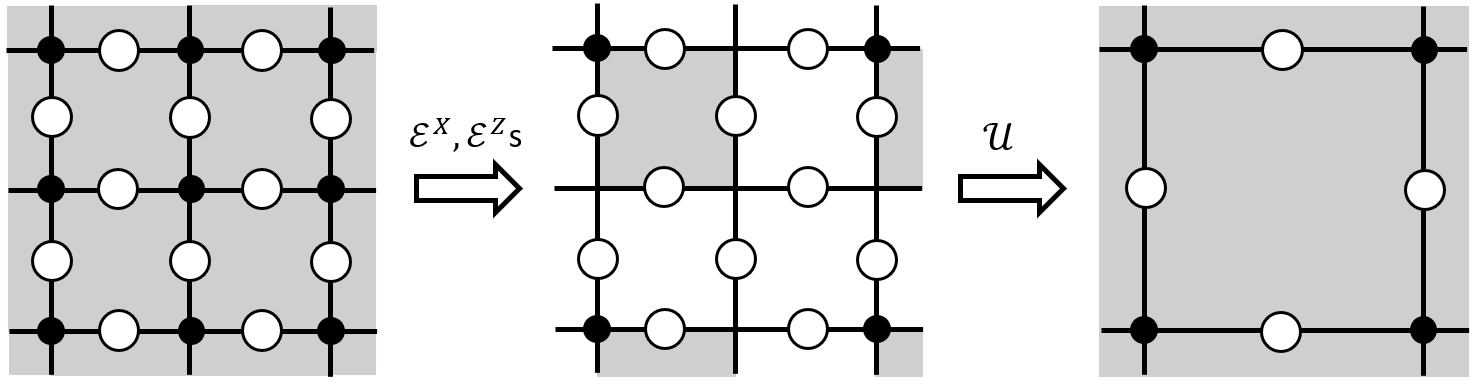}
    \caption{
    \textbf{RG scheme for the thermal toric code state--} In all panels, a plaquette (vertex) is shaded (dotted) if has a non-zero probability of holding an $m$- ($e$-) anyon, and physical qubits are associated with edges of the lattice and drawn as circles. (left$\rightarrow$mid) $\mathcal{E}^X$ and $\mathcal{E}^Z$ (see Eq.\eqref{eq: theram_tc_E_X} and Eq.\eqref{eq: theram_tc_E_Z}) acts on each $2\times2$ block of plaquettes and vertices, respectively. The resulting state has anyons on one of its sublattices' plaquettes and vertices. (mid$\rightarrow$right) After disentangling with the unitary $\mathcal{U}$ depicted in Eq.\eqref{eq: thermal_tc_disentangle} and discarding the decoupled qubits, the new state is still a toric code Gibbs state, but with renormalized temperature $p'$ (Eq.\eqref{eq: thermal_tc_p_iteration}) supported on a coarse-grained lattice.
    }
    \label{fig: thermal_tc}
\end{figure}

After applying $\mathcal{E}^{X(Z)}$ to each block of plaquettes (vertices), the resulting state only has anyons on plaquettes and vertices corresponding to a sublattice (see Fig.\ref{fig: thermal_tc}, middle panel). 
To complete one iteration of the RG, we need to discard some degrees of freedom and put the state on a coarse-grained lattice. This step can be achieved by a series of local unitary operators called elementary moves introduced in \cite{dennis2002topological, aguado2008entanglement}.

This step is most easily described graphically:
\begin{equation}\label{eq: thermal_tc_disentangle}
    \eqfig{4.2cm}{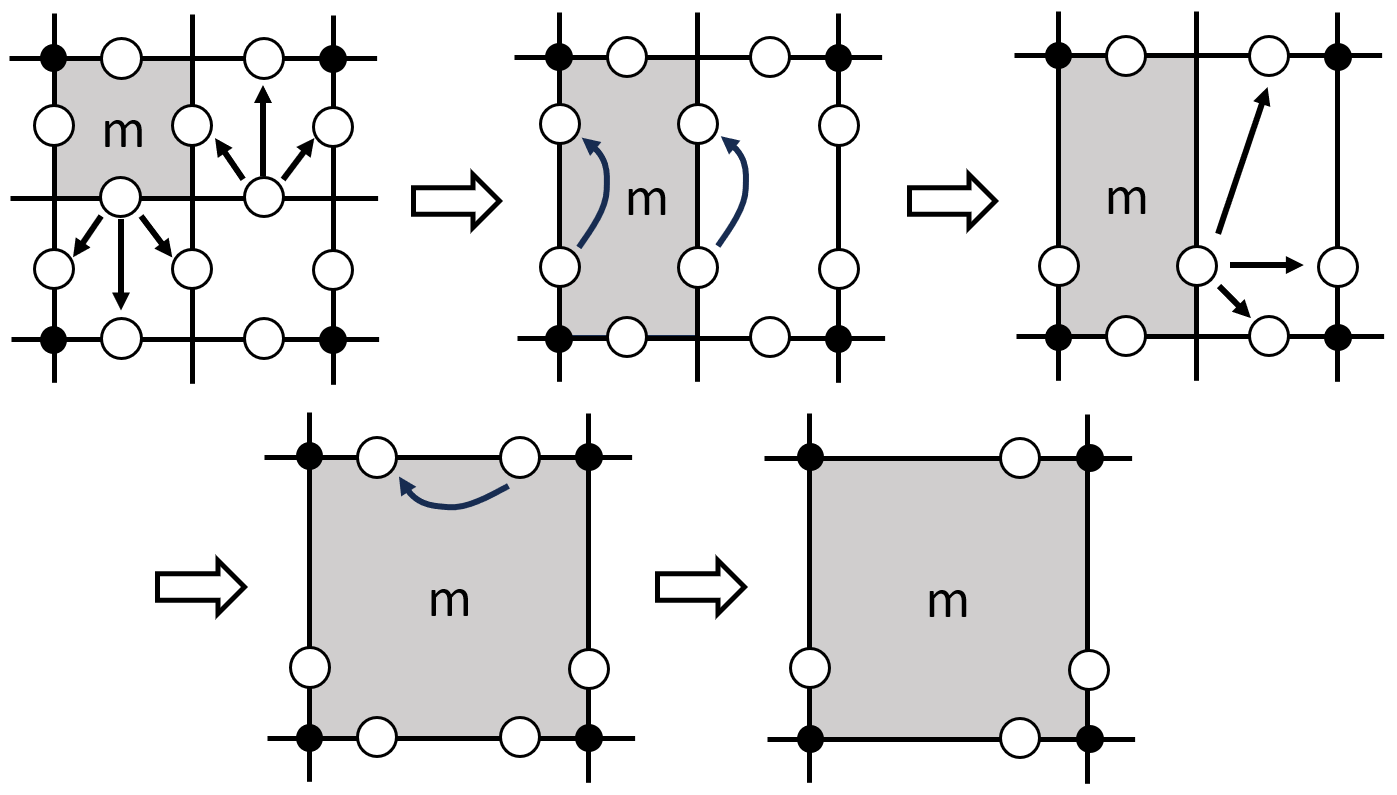}
\end{equation}
At each step, multiple controlled-not gates are applied, represented with arrows from the control qubit to target qubit. These gates decouple qubits into a product state which can then be removed, and the remaining qubits form a toric code state with anyons on a coarser lattice. Operations shown in each panel are applied in parallel to all the $2\times2$ blocks on the lattice. 


In summary, one iteration of the RG consists of:
\begin{equation}
\mathcal{C}
=
\mathcal{U} 
\circ 
\left(\bigotimes_{B\in\mathcal{B}'} \mathcal{E}_B^Z\right) 
\circ 
\left(\bigotimes_{B\in\mathcal{B}} \mathcal{E}_B^X\right) 
\end{equation}
where $\mathcal{B}$ contains $2\times2$ blocks of plaquettes and $\mathcal{B}'$ contains $2\times2$ blocks of vertices. $\mathcal{U}$ stands for the disentangling operations in Eq.\eqref{eq: thermal_tc_disentangle}. 

After one step of RG, a plaquette (vertex) contains an anyon if and only the four plaquettes (vertices) it was coarse-grained from contain an odd number of anyons.
The renormalized state is still a thermal toric code state, but with a renormalized probability (or renormalized temperature):
\begin{equation}\label{eq: thermal_tc_p_iteration}
\begin{aligned}
    & p'_{\beta'} = 4 p_\beta(1-p_\beta)^3 + 4 p^3_\beta(1-p_\beta)\\
    \Leftrightarrow\quad &\tanh{\beta'} = \tanh^4{\beta}
\end{aligned}
\end{equation}
We thus conclude that any finite temperature state $\rho_{\beta<\infty}$ flows to the infinite temperature one $\rho_{\beta=0}$ under the RG. 

Furthermore, since all channels in the RG are correlation-preserving with respect to their inputs, the RG is ideal and can be reversed. Thus there is the following bi-connection:
\begin{equation}
    \rho_{\beta=0} 
    ~\xrightarrow{{\rm RG}^{-1}}~
    \rho_{\beta} 
    ~\xrightarrow{\rm RG}~
    \rho_{\beta=0} \quad\quad \beta<\infty
\end{equation}
Since the infinite temperature state $\rho_{\beta=0}\propto \identity$ is in the trivial phase, we conclude that $\rho_\beta$ is also in the trivial phase. 

\section{Noisy toric code state} \label{sec: noisy_tc}

Now we consider the mixed-state obtained by applying noise to a pure toric code state. 
All notations in the section follow those introduced in Sec.\ref{sec: tc_review}.

The toric code model is naturally a quantum memory that stores quantum information in its ground state subspace $V$. 
In Sec.\ref{sec: the_thm}, we discuss the relation between the preservation of logical information and the preservation of the phase of matter. 
We prove that if an LC transformation $\mathcal{C}$ does not bring a pure toric code state out of its phase, then $\mathcal{C}$ must preserve any quantum information stored in $V$.

In Sec.\ref{sec: noisy_tc_rg} and \ref{sec: tMWPM}, we describe two ways to show that the dephased toric code state is in the toric code phase when dephasing strength is small. 
More specifically, we show that there exist LC transformations that bring the dephased state back to a pure toric code state. 
In Sec.\ref{sec: noisy_tc_rg}, the LC transformation is motivated by the Harrington decoder of the toric code and takes the form of an RG transformation. 
In Sec.\ref{sec: tMWPM}, the LC transformation is obtained by spatially truncating the minimum weight perfect matching (MWPM) decoder.

\subsection{Logical information and long-range entanglement} \label{sec: the_thm}
The toric code model, as its name suggests, is naturally a quantum error correcting code whose codespace is the ground state subspace $V$ (Eq.\eqref{eq: tc_V}). In this context, an important question is whether a noise channel $\mathcal{N}$ destroys logical information stored in a quantum memory. 
Mathematically, this is equivalent to asking whether there exists a recovery channel $\mathcal{R}$ such that:
\begin{equation}
    \mathcal{R}\circ\mathcal{N}(\ket{\psi}\bra{\psi}) = \ket{\psi}\bra{\psi} \ \ \ \forall \ket{\psi}\in V.
\end{equation}
In quantum error correction, $\mathcal{R}$ is often realized by a \textit{decoder}, which maps any input state into an output supported within $V$
\footnote{Since we are only concerned about in-principle recoverability of logical information, we assume all operations within $\mathcal{R}$ are noiseless.}.  If such $\mathcal{R}$ exists, we say the logical information is preserved by $\mathcal{N}$. Otherwise, we say the logical information is destroyed. 

To relate the phase of the mixed state, as defined by two-way LC connection, to preservation of logical information, we will need to first prove the following theorem.
\begin{theorem}\label{thm: thm2}
Let $V$ be the code subspace of a toric code defined on a torus. Suppose $\mathcal{C}$ is a local channel transformation satisfying $\supp ~\mathcal{C}(\ket{\psi}\bra{\psi})\subseteq V\ \ \forall \ket{\psi}\in V$, then $\mathcal C$'s action when restricted to $V$ is a unitary channel. 
\end{theorem}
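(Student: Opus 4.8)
The plan is to pass to the Heisenberg picture and reduce the statement to showing that the dual of $\mathcal{C}$, compressed onto $V$, is a $*$-isomorphism of the logical operator algebra. Define $\Phi\colon\mathcal{B}(V)\to\mathcal{B}(V)$ by $\Phi(A):=\Pi_V\,\mathcal{C}^\dagger(A)\,\Pi_V$, where $\Pi_V$ is the projector onto $V$ and $\mathcal{C}^\dagger$ the dual channel. From $\supp\,\mathcal{C}(|\psi\rangle\langle\psi|)\subseteq V$ for $|\psi\rangle\in V$ — hence, by polarization, $\Pi_V\,\mathcal{C}(X)\,\Pi_V=\mathcal{C}(X)$ for every $X\in\mathcal{B}(V)$ — one verifies: (a) the adjunction $\tr\big(\mathcal{C}(\rho)A\big)=\tr\big(\rho\,\Phi(A)\big)$ for all $\rho,A\in\mathcal{B}(V)$; (b) $\Phi(A)$ depends only on the restriction $\Pi_V A\Pi_V$; (c) $\Phi$ is completely positive and unital, $\Phi(\Pi_V)=\Pi_V$, the latter because $\mathcal{C}$ is trace preserving. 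Consequently it suffices to prove that $\Phi$ is a $*$-homomorphism: a unital $*$-homomorphism $\mathbb{M}_4\to\mathbb{M}_4$ is an inner automorphism, $\Phi(A)=U^\dagger A U$ for some unitary $U$ on $V$, and then (a) forces $\mathcal{C}(\rho)=U\rho U^\dagger$ on all of $\mathcal{B}(V)$, i.e., $\mathcal{C}|_V$ is a unitary channel.

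To obtain multiplicativity I would invoke Choi's multiplicative-domain theorem: for a unital completely positive $\Phi$, the set $\{a:\Phi(a^\dagger a)=\Phi(a)^\dagger\Phi(a),\ \Phi(aa^\dagger)=\Phi(a)\Phi(a)^\dagger\}$ is a $*$-subalgebra on which $\Phi$ restricts to a $*$-homomorphism, and a unitary $a$ belongs to it exactly when $\Phi(a)$ is again unitary. Since the logical Pauli generators $\bar Z_1,\bar X_1,\bar Z_2,\bar X_2$ are unitaries in $\mathcal{B}(V)$ that generate $\mathcal{B}(V)\cong\mathbb{M}_4$ as a $*$-algebra, everything reduces to the single claim that $\Phi(\bar P)$ is unitary on $V$ for each of these four generators $\bar P$.

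For that claim I would use spatial locality together with topological order. Represent $\bar P$ by a Pauli string $\bar P_\gamma$ along a non-contractible (direct- or dual-lattice) loop $\gamma$; since $r\ll L$ one can choose two homologous representatives $\gamma,\gamma'$ with disjoint $r$-neighborhoods, and then $\bar P_\gamma|_V=\bar P_{\gamma'}|_V$ and $\bar P_\gamma\bar P_{\gamma'}|_V=\Pi_V$ (two such strings differ by a product of stabilizers). Because $\mathcal{C}(\rho)\in\mathcal{B}(V)$, the value $\tr(\mathcal{C}(\rho)\bar P_\gamma)=\tr(\mathcal{C}(\rho)\,\Pi_V\bar P_\gamma\Pi_V)$ is representative-independent, so $\Phi(\bar P_\gamma)$ equals one operator $\tilde P:=\Phi(\bar P)$ for every $\gamma$, while $\mathcal{C}^\dagger(\bar P_\gamma)$ is supported within distance $r$ of $\gamma$. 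I would then evaluate $\tilde P\,\tilde P^\dagger$ using the two well-separated representatives: write $\tilde P\,\tilde P^\dagger=\Pi_V\,\mathcal{C}^\dagger(\bar P_\gamma)\,\Pi_V\,\mathcal{C}^\dagger(\bar P_{\gamma'})^\dagger\,\Pi_V$; drop the middle projector using $\Pi_V\,\mathcal{C}^\dagger(\bar P_\gamma)\,(\identity-\Pi_V)\,\mathcal{C}^\dagger(\bar P_{\gamma'})^\dagger\,\Pi_V=0$; replace the product of duals by $\mathcal{C}^\dagger(\bar P_\gamma\bar P_{\gamma'}^\dagger)$ using that the dual of an LC transformation factorizes on well-separated operators (the mechanism of App.~\ref{app: operator_product}); and finally, since $\Phi$ sees only the $V$-restriction of its argument and $\bar P_\gamma\bar P_{\gamma'}^\dagger|_V=\Pi_V$, conclude $\tilde P\,\tilde P^\dagger=\Phi(\Pi_V)=\Pi_V$. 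The symmetric computation gives $\tilde P^\dagger\tilde P=\Pi_V$, so $\tilde P$ is unitary, completing the claim.

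The main obstacle — and the only place where the hypotheses that $\mathcal{C}$ is \emph{local} ($r\ll L$) and that $V$ is a \emph{toric code} code space enter essentially — is the vanishing $\Pi_V\,\mathcal{C}^\dagger(\bar P_\gamma)\,(\identity-\Pi_V)\,\mathcal{C}^\dagger(\bar P_{\gamma'})^\dagger\,\Pi_V=0$. Here I would argue that an operator supported in an $O(r)$-neighborhood of $\gamma$ commutes with every stabilizer outside that neighborhood, hence maps a code state to a superposition of states all of whose anyons lie near $\gamma$; returning to the code space after also applying a $\gamma'$-localized operator then forces no anyons at all, since the $r$-neighborhoods of $\gamma$ and $\gamma'$ are disjoint. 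Secondary points to check are that the lightcone/separation conditions for the App.~\ref{app: operator_product}-type factorization hold (they do, with room to spare, once $r\ll L$) and that one never needs a $\bar Y$-type logical operator as a separate generator, so that the non-separable $\bar Y$ representatives never arise. The remaining steps — verifying (a)–(c), Choi's theorem, and the innerness of $*$-automorphisms of $\mathbb{M}_4$ — are routine.
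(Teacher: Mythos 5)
Your proposal is correct, but it takes a genuinely different formal route from the paper's proof. The paper dilates $\mathcal{C}$ into a local unitary $U$ acting on system plus ancillas, introduces the extended stabilizer code $V_0$, and shows---using a pair of well-separated homologous representatives $L_1,L_2$---that each conjugated operator $L^U=U^\dagger L U$ is itself a logical operator of $V_0$; it then evaluates $\mathcal{C}$ on the restrictions of all $15$ nontrivial logical Pauli classes, obtaining $\mathcal{C}(\tilde{R^U_i})=\tilde{R_i}$ together with $\mathcal{C}(\Pi_V)=\Pi_V$, which exhibits $\mathcal{C}|_V$ as a $*$-isomorphism. You never dilate: you compress the dual channel to the code space, $\Phi=\Pi_V\,\mathcal{C}^\dagger(\cdot)\,\Pi_V$, prove unitarity of the images of only the four logical generators, and let Choi's multiplicative-domain theorem do the algebraic bookkeeping. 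Your key lemma---the vanishing of $\Pi_V\,\mathcal{C}^\dagger(\bar P_\gamma)\,(\identity-\Pi_V)\,\mathcal{C}^\dagger(\bar P_{\gamma'})\,\Pi_V$---is the Heisenberg-dual counterpart of the paper's claim that $L^U$ preserves $V_0$, and both rest on the same physics: locality confines $\mathcal{C}^\dagger(\bar P_\gamma)$ to an $r$-neighborhood of $\gamma$, so it cannot erase a syndrome localized near the far-away $\gamma'$. The trade-offs: your route additionally needs the factorization of $\mathcal{C}^\dagger$ on well-separated operators (the mechanism of App.~\ref{app: operator_product}, which does hold for the dual of an LC transformation since the backward lightcones are disjoint and the ancilla compression factorizes) plus standard operator-algebra facts (Choi's theorem, innerness of automorphisms of $\mathbb{M}_4$); in exchange it avoids the extended code $V_0$ and the enumeration of all logical classes, and it isolates cleanly where trace preservation ($\Phi(\Pi_V)=\Pi_V$) and locality enter. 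If you write it up, state explicitly that the two homologous representatives are chosen with their $r$-neighborhoods separated by more than the diameter of a stabilizer generator, so that $\mathcal{C}^\dagger(\bar P_\gamma)$ commutes with every generator that can be violated near $\gamma'$; that is what makes the syndrome-persistence step, and hence the cross-term vanishing, airtight.
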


To gain some intuition for why locality of the channel is essential in the theorem, consider the following channel:
\begin{equation}
    \mathcal{N}(\rho) := \frac{1}{2}\rho + \frac{1}{2}\tilde X_1 \rho \tilde X_1
\end{equation}
where $\tilde X_1=\Pi_{i\in S_1} X_i$ is the logical $X$ operator of the first encoded qubit (see Sec.\ref{sec: tc_review}). $\mathcal{N}$ is not an LC transformation: $\mathcal{N}(\ket{0}^{\otimes L}\bra{0}^{\otimes L})=\frac{1}{2}(\ket{0}^{\otimes L}\bra{0}^{\otimes L} + \ket{1}^{\otimes L}\bra{1}^{\otimes L})$,  a non-trivial mixed-state with long-range correlations.  Furthermore, $\mathcal{N}$ preserves $V$, but its action within $V$ is dephasing the first logical qubit, which is not a unitary action. 

We now prove Thm.\ref{thm: thm2}.
\begin{proof}
$\mathcal{C}$  can be dilated into an LU circuit $U$ that acts jointly on the physical qubits (referred to as $P$) and the ancilla qubits (referred to as $A$). Consider $U$'s action on a codeword state:
\begin{equation}
    \ket{\psi;\textbf{0}}:=\ket{\psi}_P\ket{\textbf{0}}_A 
    ~\xrightarrow{U}~
    \ket{\phi}_{PA}
\end{equation}
where $\ket{\psi}$ is any code word state in $V$. For later convenience we define the expanded codespace $V_0$, which is the subspace of $\hilbert_{PA}$ spanned by $\{\ket{\psi}_P\ket{\textbf{0}}_A\}_{\ket{\psi}\in V}$. We use $V_0$ to refer to both the subspace and the code defined by it. $V_0$ is still a stabilizer code, whose stabilizers are those of $V$ combined with $\{Z_i:\ i\in A\}$. 

\begin{figure}[h!]
    \centering
    \includegraphics[width=0.25\linewidth]{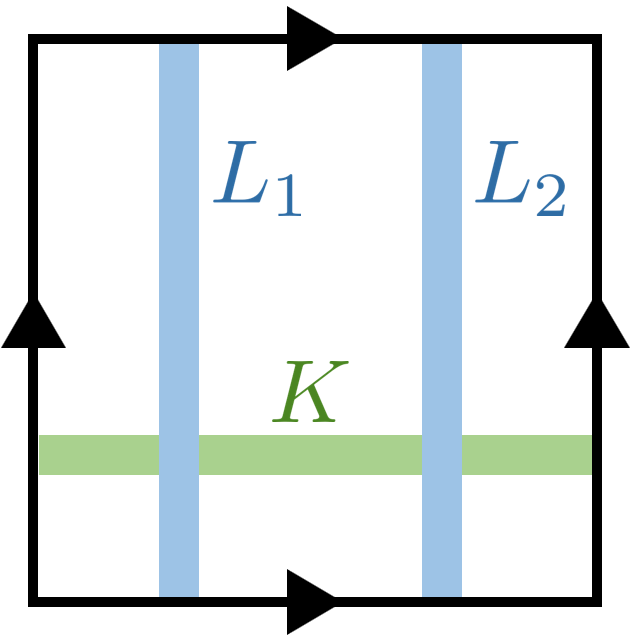}
    \caption{Geometry of operators $L_1$, $L_2$, and $K$.}
    \label{fig: in_proof}
    \end{figure}
Let $L_1$ and $L_2$ be two Pauli logical operators of the toric code that act in the same way in the code subspace $V$ (see Fig.\ref{fig: in_proof}). Thus $L_1 L_2$ is a stabilizer of the toric code. Since $\mathcal{C}$ preserves the code subspace, we have:
\begin{equation}
    \bra{\phi}L_1 L_2\ket{\phi} = \tr(C(\ket{\psi}\bra{\psi})L_1 L_2)=1
\end{equation}
This leads to:
\begin{equation}
    \bra{\psi;\textbf{0}}L_1^U L_2^U\ket{\psi;\textbf{0}} = 1
\end{equation}
where $L_i^U := U^\dagger L_i U$ has support on both $P$ and $A$, and is not necessarily a Pauli operator. 
Recalling that $L_1^U L_2^U$ is a unitary operator and the above expression holds for any $\ket{\psi; \textbf{0}}$, we conclude that $L_1^U L_2^U$ acts as logical identity in the extended codespace $V_0$. 

To proceed, we assume the spatial separation between $L_1$ and $L_2$ to be much larger than the range of $U$, so that $L_1^U$ and $L_2^U$ are also well-separated. 

We claim that both $L_1^U$ and $L_2^U$ are logical operators of $V_0$. Otherwise, there needs to be a codeword state $\ket{a}\in V_0$ such that $L_1^U\ket{a}\notin V_0 $. This implies at least one stabilizer $S$ of $V_0$ is violated by the state: $\bra{a}(L_1^U)^\dagger S L_1^U\ket{a} \neq 1$, and $S$ must have spatial overlap with $L_1^U$. Further, since $L_2^U$ is far from both $L_1^U$ and $S$, 
\begin{equation}
     \bra{a}(L_1^U L_2^U)^\dagger S L_1^UL_2^U\ket{a} \neq 1
\end{equation}
But this cannot be true because $L_1^U L_2^U\ket{a}=\ket{a}$ and $S$ is a stabilizer. 

The same reasoning applies to any Pauli logical operator, referred to as $K$, whose spatial support is perpendicular to $L=L_1$ (see Fig.\ref{fig: in_proof}). By varying $K$ and $L$, their product $R=K\cdot L$ can represent all of the $15$ inequivalent Pauli logical operators of the toric code. We fix such a set: $\mathcal{P}=\{R_{1}, R_{2}, ..., R_{15}\}$. The image of $\mathcal{P}$ under $R\rightarrow R^U$ is a set of 15 logical operators of $V_0$, as we just proved. Furthermore, since the map $R\rightarrow R^U$ preserves all the multiplication and commutation relations, we know $\mathcal{P}^U$ must act as a set of 15 inequivalent Pauli logical operators on $V_0$, up to a basis rotation.

We consider the part of $R^{U}_{i}$ (or $R_{i}$) when restricted to the codespace $V_0$ (or $V$):
\begin{equation}
\begin{aligned}
    \Pi_{V_0}R^U_i &= \tilde{R^U_i}\otimes\ket{\textbf{0}}\bra{\textbf{0}}\\
    \Pi_V R_i &= \tilde{R_i}
\end{aligned}
\end{equation}
where $\tilde{R^U_i}$ and $\tilde{R_i}$ are operators acting within $V$ only. 
$\Pi_V$ is the projector to the subspace $V$ and $\Pi_{V_0}=\Pi_{V}\otimes\ket{\textbf{0}}\bra{\textbf{0}}$. As explained, both $\{\tilde{R_i}\}$ and $\{\tilde{R^U_i}\}$ realize the algebra of Pauli operators in the logical space.

We have:
\begin{equation}
    \begin{aligned}
        \mathcal{C}(\tilde{R^U_i})
        &=
        \tr_{A}(U (\tilde{R^U_i}\otimes \ket{\textbf{0}}\bra{\textbf{0}}) U^\dagger)\\
        &=
        \tr_{A}(U (R_i^U \Pi_{V_0}) U^\dagger)\\
        &= 
        \tr_{A}(R_i U\Pi_{V_0}U^\dagger)\\
        &= {R_i} \mathcal{C}(\Pi_V)\\
        &=
        \tilde{R_i} \mathcal{C}(\Pi_V)
    \end{aligned}
\end{equation}
The second to last equality holds because $R_i$ is supported on $P$ only, while the last one holds because $\supp~\mathcal{C}(\Pi_V)\subseteq V$ by assumption.

On the \textit{r.h.s.} of the second equality above $R_i^U$ and $\Pi_{V_0}$ commute. Thus if we change their order then the same derivation gives:
\begin{equation}
    \mathcal{C}(\tilde{R^U_i}) = \mathcal{C}(\Pi_V)\tilde{R_i}
\end{equation}
Since the relation holds for any $i\in\{1,...,15\}$ and $\supp~\mathcal{C}(\Pi_V) \subseteq V$, we know $\mathcal{C}(\Pi_V)\propto \Pi_V$. Further, since $\mathcal C$ is trace-preserving, we have $\mathcal{C}(\Pi_{V})=\Pi_V$. Thus:
\begin{equation}
    \mathcal{C}(\tilde{R^U_i}) = \tilde{R_i}
\end{equation}
This implies that when restricted to $V$,  $\mathcal{C}(\cdot)$ is a $*$-isomorphism and must be a unitary channel.
\end{proof}

We use the theorem to explore the relation between the phase of the noisy toric code state and the preservation of quantum information stored.

Consider a toric code's codeword state $\ket{\psi}\in V$. Suppose $\mathcal{N}_\psi$ is an LC transformation that preserves the toric code phase. By definition, there exists another LC transformation $\mathcal{D}_\psi$ such that $\supp~\mathcal{D}_\psi\circ\mathcal{N}_\psi(\ket{\psi}) \subseteq V$ \footnote{We emphasize that the condition should not be $\mathcal{D}_{\psi}\circ\mathcal{N}_\psi(\ket{\psi})=\ket{\psi}$, according to our definition of the toric code phase in Sec.\ref{sec: tc_review}}. We first point out that the pair $(\mathcal{N}_\psi, \mathcal{D}_\psi)$ satisfies
\begin{equation}\label{eq: tc_state_independence}
   \mathcal{D}_{\psi}\circ\mathcal{N}_\psi(\ket{\psi'}\bra{\psi'})\subseteq V~~~\forall \ket{\psi'}\in V,
\end{equation}
which we prove in App.\ref{app: tc_state_independence}. Since the choice of $(\mathcal{N}_\psi, \mathcal{D}_\psi)$ does not depend on the codeword state $\ket{\psi}$, we drop the $\psi$ subscript henceforth.

The channel $\mathcal{D}\circ\mathcal{N}$ thus satisfies the condition in Thm.\ref{thm: thm2}, according to which we have 
\begin{equation}
    \mathcal{D}\circ\mathcal{N}(\ket{\psi}\bra{\psi})=U\ket{\psi}\bra{\psi}U^\dagger~~~\forall \ket{\psi}\in V
\end{equation}
for some logical unitary operator $U$. 

We thus conclude that if the noise $\mathcal{N}$ preserves the toric code phase, then it also preserves the logical information stored. In particular, the recovery map can be chosen as $R(\cdot)=U^\dagger\mathcal{D}(\cdot)U$. 


We consider a more detailed scenario where the channel $\mathcal{N}=\mathcal{N}_p$ has a strength parameter $p$.
When the noise is very strong, both the toric code phase and the logical information stored should be destroyed. Thus one can define two critical noise strengths: $p_{\rm t.c.}$, beyond which the noisy state is no longer in the toric code phase; and $p_{\rm coding}$, beyond which the stored logical information is no longer recoverable. The previous analysis shows that
\begin{equation}
    p_{\rm t.c.} \leq p_{\rm coding}
\end{equation}
Namely, the loss of logical information must occur after transitioning out of the toric code phase. 

If there is a gap between $p_{\rm t.c.}$ and $p_{\rm coding}$, then the noisy state $\mathcal{N}_p(\ket{\psi}\bra{\psi})$ for $p\in(p_{\rm t.c.}, p_{\rm coding})$ is not in the toric code phase but still contains logical information. In this case, the corresponding recovery map $\mathcal{R}$ that recovers logical information must be non-LC. 



\subsection{RG of the dephased toric code state} \label{sec: noisy_tc_rg}
We illustrate these general results in a specific example, for which we construct explicit RG channels.  We consider a toric code ground state $\ket{\rm t.c.} \in V$ subject to phase-flip noise with strength $p$ ( Eq.\eqref{eq: X_dephase}):
\begin{equation}\label{eq: dephased_tc}
    \rho_p := (\mathcal{N}^Z_p)^{\otimes L} \left(\ket{\rm t.c.}\bra{\rm t.c.}\right)
\end{equation}
We ask whether the state is in the same phase as $\ket{\rm t.c.}$, when $p$ is small. 

\begin{figure*}
    \centering
    \subfloat[]{\includegraphics[width=0.18\linewidth]{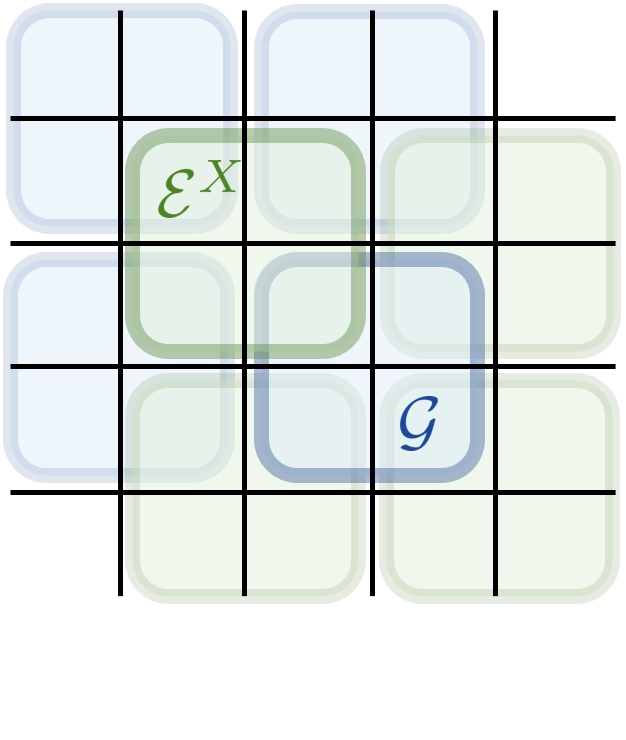}}
    \hspace{0.02\linewidth}
    \subfloat[]{\includegraphics[width=0.36\linewidth]{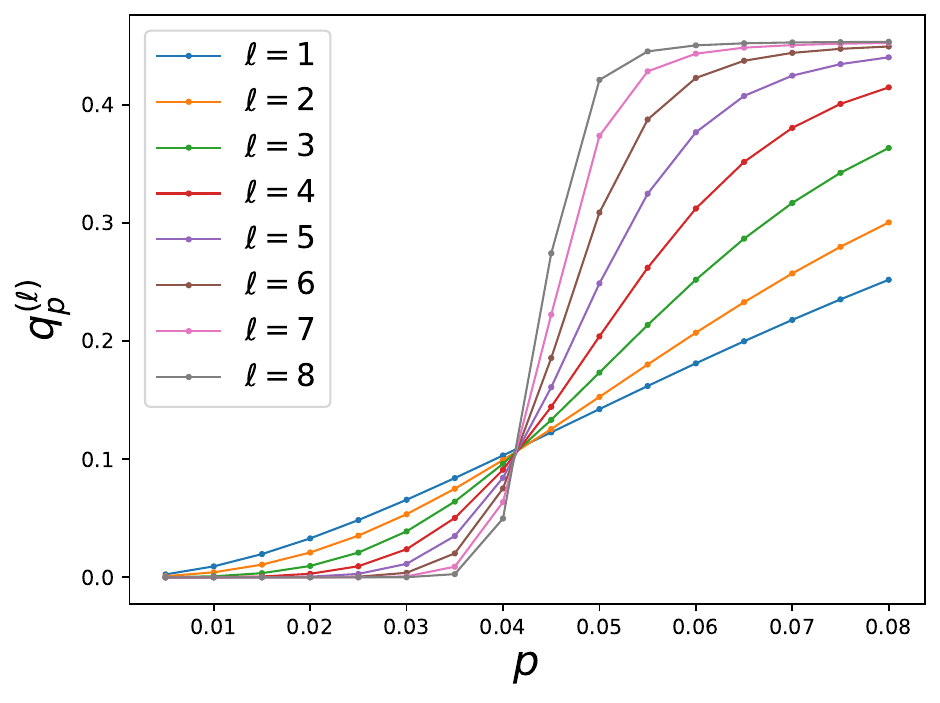}}
    \hspace{0.02\linewidth}
    \subfloat[]{\includegraphics[width=0.36\linewidth]{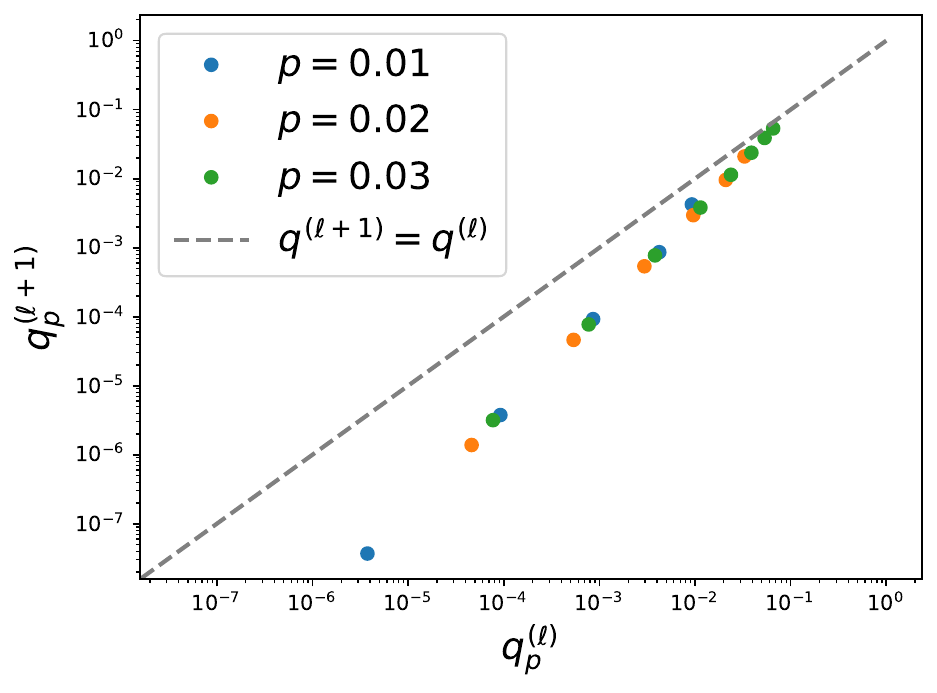}}
    \hspace{0.02\linewidth}
    \caption{
    \textbf{RG of the dephased toric code state--}
    (a) In each RG iteration, $\mathcal{G}$ is applied in parallel to all the odd blocks (blue), then $\mathcal{E}^X$ is applied to all the even blocks (green). Finally disentangling unitaries (see Eq.\eqref{eq: thermal_tc_disentangle}, not drawn in the figure) are applied to reduce the lattice size by half. 
    (b) RG flow of the anyon density $q_p^{(\ell)}$. 
    (c) Iteration relation of $q_p^{(\ell)}$ when approaching $0$, for various choices of $p$.
    }
    \label{fig:dephased_tc}
\end{figure*}

It is convenient to work in the anyon number basis Eq.\eqref{eq: anyon_basis}, and since all states in this example are in the $e$-anyon-free subspace, we omit the $\be$ labeling henceforth. An $Z$ operator acting on a qubit on an edge will create a pair of anyons in the two plaquettes adjacent to the edge. But if two anyons meet in the same plaquette, they annihilate. Thus if we fix the set of edges acted on by $Z$, then anyons appear on faces adjacent to an odd number of $Z$s (see Fig.\ref{fig: noisy_tc_anyon_config}). 

\begin{figure}[h!]
    \centering
    \includegraphics[width=0.35\linewidth]{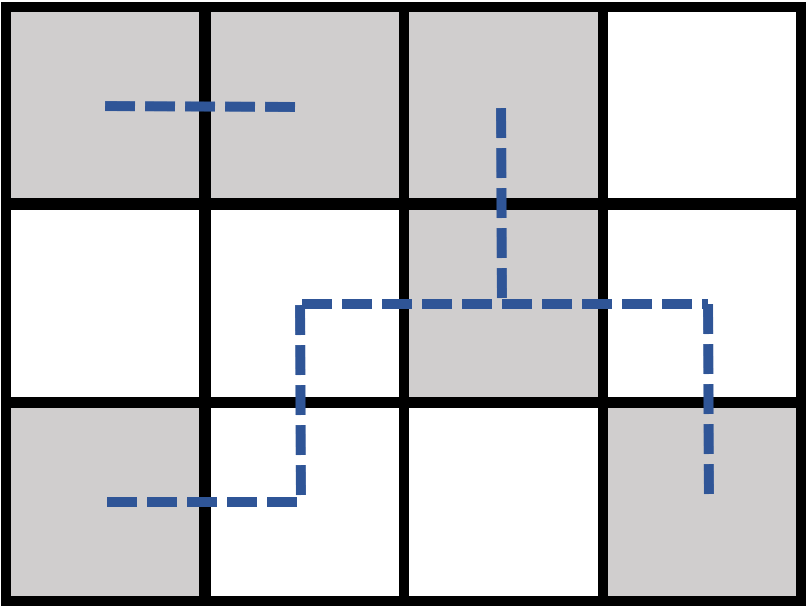}
    \caption{\textbf{A sample anyon configuration for noisy toric code}-- A dashed line on an edge denotes the corresponding qubit's phase is flipped by an $Z$ operator.  Anyons are created in plaquettes (shaded) where an odd number of dashed lines meet.}
    \label{fig: noisy_tc_anyon_config}
    \end{figure}
    

The noisy state is a classical mixture of anyon configurations which differ significantly from those of the Gibbs state. When $p$ is sufficiently small, the typical size of an error cluster is much smaller than the typical distance between clusters.  The errors create anyons at the boundary of each cluster. 



This picture suggests that by locally identifying clusters and pairing up anyons therein, one can remove all errors if $p$ is sufficiently small. 
This intuition underlies the design of several decoding algorithms for the toric code \cite{duclos2010fast,sergey2013self,breuckmann2016local,harrington2004analysis}, which aim to pair up anyons such that the quantum information stored in the code remains intact. 
As we show now, these decoders can be modified into RG schemes to reveal mixed-state phases of the noisy toric-code states.

We construct a simplified version of the Harrington decoder for the toric code \cite{harrington2004analysis, breuckmann2016local} to demonstrate that $\rho_p$ and $\ket{\rm t.c.}$ are in the same phase when $p$ is small.  We first partition the lattice into even blocks  $\mathcal{B}_{\rm even}$ and odd blocks $\mathcal{B}_{\rm odd}$ (see Fig.\ref{fig:dephased_tc}). Odd blocks are obtained by translating even blocks by one lattice spacing in both spatial directions. The two types of blocks will play different roles: coarse-graining will occur on even blocks and anyons will be paired up within odd blocks, regarded as boundary regions of even blocks. 

Each step of the RG is composed of three layers of local channels:
\begin{equation}
\mathcal{C}
=
\mathcal{U} 
\circ 
\left(\bigotimes_{B\in\mathcal{B}_{\rm even}} \mathcal{E}_B^X\right) 
\circ 
\left(\bigotimes_{B\in\mathcal{B}_{\rm odd}} \mathcal{G}_B\right)
\end{equation}
where the final step $\mathcal U$ is the disentangling operation depicted in Eq.\eqref{eq: thermal_tc_disentangle}, and $\mathcal{E}^X$ is the coarse-graining channel defined in Eq.\eqref{eq: theram_tc_E_X}. 

The main difference between this RG scheme and the one for the thermal toric code state (see Sec.\ref{sec: thermal_tc}) is the introduction of $\mathcal{G}$s. 
$\mathcal{G}_B$ annihilates all anyons within $B$ only if there are an even number of them; otherwise it leaves anyons within $B$ unmodified:
\begin{equation}
    \mathcal{G}_B := \sum_{\bm\in\{0,1\}^{\otimes 4}}\tilde{U}_{\bm}P_{\bm}(\cdot)P_{\bm}\tilde{U}_{\bm}^\dagger
\end{equation}
where $\tilde U_\bm$ equals $U_\bm$ (Eq.\eqref{eq: theram_tc_E_X}) if $\pi(\bm)=0$, and is $\identity$ when $\pi(\bm)=1$. 
The heuristic reason for introducing $\mathcal{G}_{B}$ is to pair up anyon clusters across the boundaries of the even blocks before the coarse-graining step. If such anyons were not paired, the coarse-graining on even blocks potentially prolongs them into clusters with a larger size, thus hindering effective anyon removal.

After each RG step $\mathcal{C}$, the new state is still an ensemble of different anyon configurations, albeit one that is not analytically tractable. Thus, we numerically compute how the RG steps affect the anyon density:
\begin{equation}
    q_p^{(\ell)} = \frac{1}{|P^{(\ell)}|}\sum_{\square\in P^{(\ell)}} \tr\left(\rho_p^{(\ell)}\frac{1- A_{\square}}{2}\right)
\end{equation}
where $P^{(\ell)}$ is the set of plaquettes on the renormalized lattice and $\rho_p^{(\ell)}$ is the renormalized state after $\ell$ iterations. $q=0$ implies the state is in the ground state subspace $V$ of the toric code Hamiltonian Eq.\eqref{eq: tc_hamiltonian}.

We use Monte Carlo method to study the flow of $q^{(\ell)}_p$ under RG. The simulation (Fig.\ref{fig:dephased_tc} (b)) shows that there is a sharp transition of $q_p^{(\ell)}$ at $p_c\approx 0.041$: 
\begin{equation}
    \lim_{l\rightarrow \infty} q_p^{(\ell)} =
    \left\{
    \begin{array}{lr}
        0 &  p<p_c\\
        O(1) & p>p_c
    \end{array}
    \right.
\end{equation}
When $p<p_c$ the RG successfully annihilates all anyons and the fixed-point state is in the ground state subspace of $V$; while when $p>p_c$, the fixed-point state has finite anyon density. 

Furthermore, when the anyon density $q^{(\ell)}$ approaches $0$, it transforms under each RG step as (see Fig.\ref{fig:dephased_tc} (c)):
\begin{equation}
    q^{(\ell+1)} \simeq (q^{(\ell)})^\gamma
\end{equation}
for $\gamma>1$. This behavior guarantees that a small number of iterations is sufficient for the convergence to the toric code ground state subspace (see the appendix App.\ref{app: convergence_noisy_tc} for a details).

We thus obtain the LC bi-connection:
\begin{equation}
    \ket{\rm t.c.} 
    ~\xrightarrow{{\rm noise}}~
    \rho_{p} 
    ~\xrightarrow{\rm RG}~
    \ket{\rm t.c.'} \quad\quad p<p_c
\end{equation}
where $\ket{\rm t.c.'}$ is another toric code state. 
This shows that $\rho_{p}$ is in the same phase as the pure toric code state, when $p<p_c$ and therefore $X$-dephasing noise is an irrelevant perturbation to the topologically ordered phase. 

We emphasize that this analysis does not show that $\rho_p$ with $p>p_c$ is in a different phase, because no bi-connection has been identified with this decoder. In fact, in the next section, we will construct another local channel that establishes that the phase boundary of the toric code phase extends to a much higher $p_c$.



\subsection{Truncated minimal weight perfect matching channel}\label{sec: tMWPM}
The seminal work \cite{dennis2002topological} showed that the dephased toric code state (Eq.\eqref{eq: dephased_tc}) retains its logical information up to a critical point $p_{\rm coding} \approx 0.108$, by relating the coding phase transition to the ferromagnetic-paramagnetic transition in the random bond Ising model.
A recovery channel called the maximal likelihood decoder \cite{dennis2002topological, bravyi2014efficient} decodes the logical information for any $p<p_{\rm coding}$, but the channel is not an LC transformation. 

The minimal weight perfect matching (MWPM) decoder is another decoder introduced in \cite{dennis2002topological}. It has a decoding threshold $p_{\rm MWPM}\approx 0.103$  very close to $p_{\rm coding}$ \cite{wang2003confinement}. 
The MWPM decoder, as a quantum channel, is also not an LC transformation. In the rest of this section, we show that it is possible to approximate the MPWM decoder's action arbitrarily well with an LC transformation whenever $p<p_{\rm MWPM}$. Consequently, we show that any dephased toric code state with $p<p_{\rm MWPM}$ is in the toric code phase. 

The core component of the MWPM decoder (henceforth referred to as $\mathcal{C}^{\rm MWPM}$) is a classical algorithm that solves the MWPM problem, namely looking for an anyons pairing scheme that minimizes the total length of the strings connecting pairs. Afterward, the decoder annihilates each anyon pair by acting with the string of $X$ operators connecting the pair. 

\begin{figure}[h]
    \centering
    \subfloat[]{\includegraphics[width=0.3\linewidth]{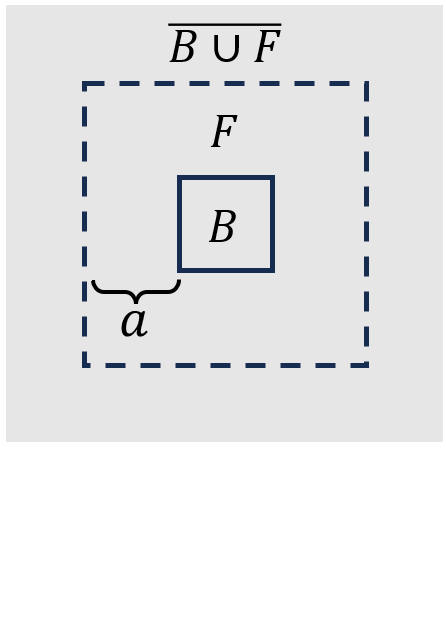}}
    \hspace{5pt}
    \subfloat[]{\includegraphics[width=0.65\linewidth]{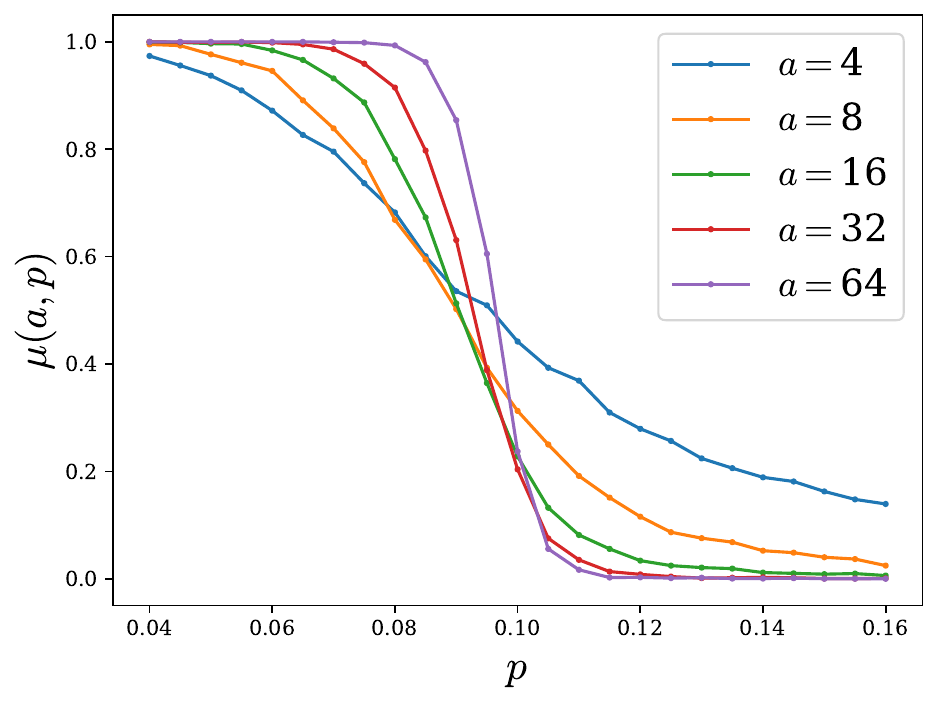}}
    \caption{\textbf{Truncated minimum weight perfect matching channel}-- (a) For a given block $B$, the corresponding channel $\mathcal{E}_{B, a}$ acts on both $B$ and a buffer region $F$ of a width $a$. (b) $\mu(a,p)$ is the probability that the truncated and global MWPM algorithms produce the same anyon pairings. It is plotted against $p$ for various choices of $a$. }
    \label{fig: mwpm}
\end{figure}

We now devise a way to truncate the $\mathcal{C}^{\rm MWPM}$ into an LC transformation. We first partition plaquettes into disjoint blocks, each with a size $b\times b$. 
For each block $B$, we apply a local channel $\mathcal{E}_{B, a}$ which acts jointly on $B$ and a buffer region $F$ of width $a$ surrounding $B$ (see Fig.\ref{fig: mwpm}). 
The local channel first solves the MWPM of anyons within the truncated region $B\cup F$, with the additional requirement that each anyon can either pair with another anyon or with the outer boundary of $F$ (the dashed line in Fig.\ref{fig: mwpm}). Then given the pairing scheme suggested by the MWPM solution, the channel only accepts a subset of it, namely pairs with at least one anyon within $B$. 
The truncated MWPM (tMWPM) channel applies the above channel to every block:
\begin{equation}
    \mathcal{C}^{\rm tMWPM}_a := \prod_{B\in \mathcal{B}} \mathcal{E}_{B, a}
\end{equation}
Note that different $\mathcal{E}_{B,a}$ can have overlapping domains. But since each $\mathcal{E}_{B, a}$ acts only on a patch of $(b+2a)^2$ qubits, we can always rearrange $\{\mathcal{E}_{B, a}\}$ into an $O((b+2a)^2)$-layer circuit so that each layer is composed of channels with non-overlapping domains.
After the rearrangement, it is apparent that $\mathcal{E}^{\rm tMWPM}_{a}$ is a range-$O((b+2a)^4)$ LC transformation (because both the depth and the range of gate is $O((b+2a)^2)$). 

The assumption behind the design of the tMWPM channel is the existence of a correlation length $\xi(p)$,
such that when $a\gg\xi(p)$, solving MWPM on $B\cup F$ only and solving MWPM on the whole system produce the same pairing for anyons in $B$. 
If the assumption holds for every block $B$, then the $\mathcal{C}_a^{\rm tMWPM}$ pairs all anyons in the same way as $\mathcal{C}^{\rm MWPM}$:
\begin{equation}
    \mathcal{C}^{\rm tMWPM}_{a}(\rho_p) \approx \mathcal{C}^{\rm MWPM}(\rho_p) = \ket{\rm t.c.} \quad a\gg\xi(p)
\end{equation}

We provide a rough estimate of how large $a$ needs to be for the `$\approx$' above to hold (agreement between local and global MWPM with probability $1-\epsilon$).  Given the correlation length assumption, the probability that for a single block $B$ the global and the truncated MWPM agrees should be $(1-e^{-a/\xi(p)})$. 
Assuming these probabilities for different blocks are independent (which should hold for far apart blocks), then we need
\begin{equation}
    (1-e^{-a/\xi(p)})^{L^2/b^2} > 1-\epsilon
\end{equation}
which occurs when
\begin{equation}
    a = \xi(p) O\left(\log\frac{L^2}{\epsilon}\right).
\end{equation}
$a$ diverges whenever $\xi(p)$ does, and this is expected to happen when $p\rightarrow p_{\rm MWPM}$.

To numerically support the assumption that there exists a correlation length for MWPM, 
we sample anyon configurations in the $X$-dephased toric code state and solve the MWPM first for all the anyons, and then only for anyons within $B\cup F$ (Fig.\ref{fig: mwpm}(a)). 
Then we compute the probability $\mu(a, p)$ that the two solutions are identical on $B$. 
We let both the system size $L$ and the diameter of $B$ be proportional to $a$, the width of the buffer $F$, so that the system has only one length scale $a$.

The simulation result is shown in Fig.\ref{fig: mwpm}(b) and suggests there is a critical point $p_{\rm tMWPM}$ in the interval $(0.10, 0.11)$, presumably consistent with $p_{\rm MWPM}$ in the thermodynamic limit. 
Below $p_{\rm tMWPM}$, we observe that $\lim_{a\rightarrow \infty}\mu(a, p)=1$. This indicates that the MWPM solution within $B$ is independent of anyons that are more than $O(\xi)$ away from $B$, for some correlation length $\xi$ which diverges at $p_{\rm tMWPM}$.  tMWPM thus serves as a local channel which, along with the noise channel, establishes the two-way connection demonstrating the toric code phase up to $p_{\rm tMWPM} \approx 0.1$.    Above $p_{\rm tMWPM}$, $\lim_{a\rightarrow \infty}\mu(a, p)=0$, implying non-locality in the MPWM solution. 

We point out that the simulation method above provides a way to detect the toric code phase using the anyon distribution data. One can fix a region $B$, implement MWPM on $B\cup F$, and gradually increase the buffer width $a$. If the MWPM solution when restricted to $B$ becomes stationary after $a$ is larger than some $a^*=O(1)$ with high probability, then the original mixed state is in the toric code phase because the tMWPM channel with $a\gtrsim a^*\log L$ can transform it into a pure toric code state. The method can potentially be used to detect mixed-state topological order in experiments.

\section{Discussion and outlook}
Our work provides two routes (RG and local versions of decoders) for constructing local channels connecting two mixed states to prove they are in the same phase.  We formulated a real-space RG scheme for mixed states and proposed the correlation-preserving property as a guiding criterion for finding coarse-graining maps; this property is necessary and sufficient for the map's action to be reversible (Thm.\ref{thm:equivalence}). 
We applied this formalism to identify the phases of several classes of mixed states obtained by perturbing a long-range entangled pure state with noise or finite temperature, and in particular we constructed an exact RG flow of the finite temperature 2D toric code state to infinite temperature.   

For toric code subject to decoherence, we also established a relation between the mixed state phase of the toric code and the integrity of logical information. In Thm.\ref{thm: thm2}, we proved that if local noise preserves the long-range entanglement of the toric code (and the resulting mixed state remains within the same phase as toric code), it must also preserve logical information encoded in the initial pure state. We conjecture that the converse statement is also true, namely, if local noise destroys the long-range entanglement of toric code, it must also destroy any encoded logical information. Even though the theorem and subsequent discussion focused on the toric code state, the main proof idea generalizes to many other topological codes and their corresponding phases. 

\begin{itemize}
    \item After formalizing the definition of mixed-state phase, one natural question to ask is whether there is a nontrivial phase that has no pure state nor classical state in this phase. A promising candidate is the ZX-dephased toric code state recently studied in \cite{wang2023intrinsic}. Since the state (when noise is strong) loses logical information \cite{wang2023intrinsic}, it is provably not in the toric code phase according to our Thm.\ref{thm: thm2}. Thus if the state is not in the trivial phase, it is an example of intrinsic mixed-state topological order. Another class of potential examples are decohered critical ground states \cite{zou2023channeling, lee2023quantum}, because such states naturally sit between a long-range entangled pure state and a long-range correlated classical state.
    \item One related question is whether one can find a computable quantity to detect nontrivial mixed-state phases.
    Topological entanglement negativity has been successfully used as a probe of mixed-state topological order \cite{neg,fan2023diagnostics}, but its robustness under LC transformations needs to be studied further.
    \item The decoherence-induced toric code transition can also be understood as a separability transition of the mixed-state \cite{chen2023separability}, and it would be valuable to relate this perspective with the mixed state phase and local channel perspective. 
    \item Pure state RG methods like DMRG serve as powerful computational methods for analyzing many-body systems. It is thus important to develop a numerical implementation of our mixed-state RG scheme. To facilitate simulations, one needs to first find an efficient representation of the mixed state (\textit{e.g.} using tensor networks), then update it iteratively using exact or approximately correlation-preserving maps, obtained by solving the optimization problem Eq.\eqref{eq: mixed_optimization}. We leave this problem for future exploration.
    \item { Steady states of dissipative dynamics form an important class of mixed states, for which there are several recent proposals of defining stable phase of matter \cite{rakovszky2023defining, liu2024dissipative}. When dealing with such problems, usually the dynamics (\textit{e.g.} Lindbladian superoperator) is known while a description of its steady state is unknown; therefore techniques developed in the current work do not immediately apply. It is desirable to develop RG schemes that can operate on Lindbladians directly.}
    \item As presented in Sec.\ref{sec: tMWPM}, tMWPM also serves as a practical probe of the mixed state toric code phase using anyon measurements. However, in experiments imperfect measurements lead to a finite density of `fake' anyons as well as unprobed anyons. To address this, one needs to consider a specific model of measurement errors and perform more than one round of measurements. Another potential direction is generalizing tMWPM to other topologically ordered mixed-state phases in two or higher dimensions.
\end{itemize}

\begin{acknowledgements}
We thank Arpit Dua, Tyler Ellison, Matthew P.A. Fisher, Tarun Grover, Ethan Lake, Yaodong Li, Zi-Wen Liu, Tsung-Cheng Lu, Ruochen Ma, and Michael Vasmer for helpful discussions and feedback. We also thank Roger G. Melko and Digital Research Alliance of Canada for computational resources. SS acknowledges the KITP graduate fellow program, during which part of this work was completed.
This work was supported by the Perimeter Institute for Theoretical Physics (PI) and the Natural Sciences and Engineering Research Council of Canada (NSERC).  Research at PI is supported in part by the Government of Canada through the Department of Innovation, Science and Economic Development Canada and by the Province of Ontario through the Ministry of Colleges and Universities. 
This research was also supported in part by the National Science Foundation under Grant No. NSF PHY-1748958 and NSF PHY-2309135, the Heising-Simons Foundation, and the Simons Foundation (216179, LB)
\end{acknowledgements}

\bibliography{main.bib}

\appendix
\onecolumngrid
\section{Various short proofs}

\subsection{Relation between pure- and mixed- state phase equivalence} \label{app: mixed_and_pure}
We sketch a proof of the following statement: On a given lattice, two pure states $\ket{\psi_1}$ and $\ket{\psi_2}$ are of the same mixed-state phase if and only if there exists an invertible state $\ket{a}$ on the same lattice such that $\ket{\psi_1}$ and $\ket{\psi_2}\otimes \ket{a}$ are of the same pure state phase. A many-body ground state $\ket{a}$ is called an invertible state if there exists another state $\ket{\tilde a}$ such that $\ket{a}\otimes\ket{\tilde a}$ can be LU transformed into a product state.

Since a LU transformation is also a LC transformation, pure-state phase equivalence trivially implies mixed-state phase equivalence. 

Now we show the other direction. Assume that there exists a pair of LC transformations $\mathcal C_{1,2}$ such that $\mathcal C_1(\rho_1)\approx \rho_2$ and $\mathcal C_2(\rho_2)\approx \rho_1$. 
We use $U_1, U_2$ to denote the unitary circuit within the definition of $\mathcal C_1$ and $\mathcal C_2$.

We observe that: Since $C_1(\rho_1)\approx\rho_2$ and $\rho_{2}$ is a pure state, the state right before the tracing-out operation must factorize as:
\begin{equation}
    U (\ket{\psi_{1}}\otimes\ket{\textbf{0}})\approx \ket{\psi_{2}}\otimes \ket{a_1}
\end{equation}
where $\ket{a_1}$ is the state that supports on qubits to be traced out and is defined on the same lattice. Recalling the definition of pure-state phases, we conclude the (pure-state) phase equivalence: $\ket{\psi_1}\underset{{\rm pure}}{\sim} \ket{\psi_2}\otimes \ket{a_1}$. Following a similar argument, we can also obtain $\ket{\psi_{2}}\underset{{\rm pure}}{\sim} \ket{\psi_1}\otimes\ket{a_2}$. We thus have:
\begin{equation}
    \ket{\psi_1}
    \underset{{\rm pure}}{\sim} 
    \ket{\psi_2}\otimes \ket{a_1} 
    \underset{{\rm pure}}{\sim} 
    \ket{\psi_1}\otimes \ket{a_2}\otimes \ket{a_1}
\end{equation}
As a result, the state $\ket{a_1}\otimes\ket{a_2}$ is in the trivial phase, and $\ket{a_1}$ is either in the trivial phase or an invertible state. 
Note that the proof assumes there is no `catalyst' effect in phase equivalence relation: if $\ket{\psi_1}$ and $\ket{\psi_2}$ cannot be LU connected to each other, then neither does the pair $\ket{\psi_1}\otimes\ket{a}$ and $\ket{\psi_2}\otimes\ket{a}$, for any state $\ket{a}$.

\subsection{Derivation of Eq.\eqref{eq: approx_correlation_preserving2}} \label{app: approx_correlation_preserving}
Let us assume $\mathcal{E}_{A\rightarrow A'}$ approximately preserves the correlation of a bi-partite state $\rho_{AB}$:
\begin{equation}
    I_{A:B}(\rho)-I_{A':B}(\mathcal{E}_{A\rightarrow A'}(\rho))=\epsilon,
\end{equation}
After defining $W$ and $\sigma_{A'EB}$ the same way as in the proof of Thm.\ref{thm:equivalence}, the condition above is equivalent to:
\begin{equation}
    I_{B:E|A'}(\sigma_{A'EB})=\epsilon.
\end{equation} 
Using the result in \cite{fawzi2015quantum}, there exists a reconstruction map $\mathcal{T}_{A'\rightarrow A'E}$ to approximately reconstruct $\sigma_{A'EB}$ from $\sigma_{A'B}$, with the approximation error bounded as:
\begin{equation}
    \epsilon \geq -2\log_2 F(\sigma_{A'EB}, \mathcal{T}_{A'\rightarrow A'E}(\sigma_{A'B})).
\end{equation}

We still define the recovery map $\mathcal{D}_{A'\rightarrow A}$ using Eq.\eqref{eq:recovery_map}. We have:
\begin{equation}
\begin{aligned}
    &F(\rho_{AB}, \mathcal{D}\circ\mathcal{E}(\rho_{AB}))\\
    =&
    F\left(
    \tr_R\left(U_W^\dagger U_W\left((\rho_{AB})\otimes\ket{0}_R\bra{0}\right)U^\dagger_W U_W\right),
    \tr_R\left(U_{W}^\dagger\mathcal{T}(\sigma_{A'B})U_W\right)
    \right)\\
    \geq&
    F\left(
    U_W\left((\rho_{AB})\otimes\ket{0}_R\bra{0}\right)U^\dagger_W,
    \mathcal{T}(\sigma_{A'B})
    \right)\\
    =& F\left(\sigma_{A'EB}, \mathcal{T}(\sigma_{A'B})\right),
\end{aligned}
\end{equation}
where the inequality is due to the monotonicity of $F$ under quantum channels. Combining the two expressions above, we arrive at the approximate recoverability we want:
\begin{equation}
    \epsilon \geq -\log_2 F(\rho_{AB}, \mathcal{D}\circ\mathcal{E}(\rho_{AB}))
\end{equation}

\subsection{Derivation of Eq.\eqref{eq: operator_product}} \label{app: operator_product}
We assume that the two operators $o_1^{(\ell)}=o_1$ and $o_2^{(\ell)}=o_2$ have non-overlapping lightcones $\mathcal{L}(o_1)$ and $\mathcal{L}(o_2)$ in the circuit representation of $\mathcal{F}^{\dagger}$, as illustrated below:
\begin{equation*}
    \eqfig{3.5cm}{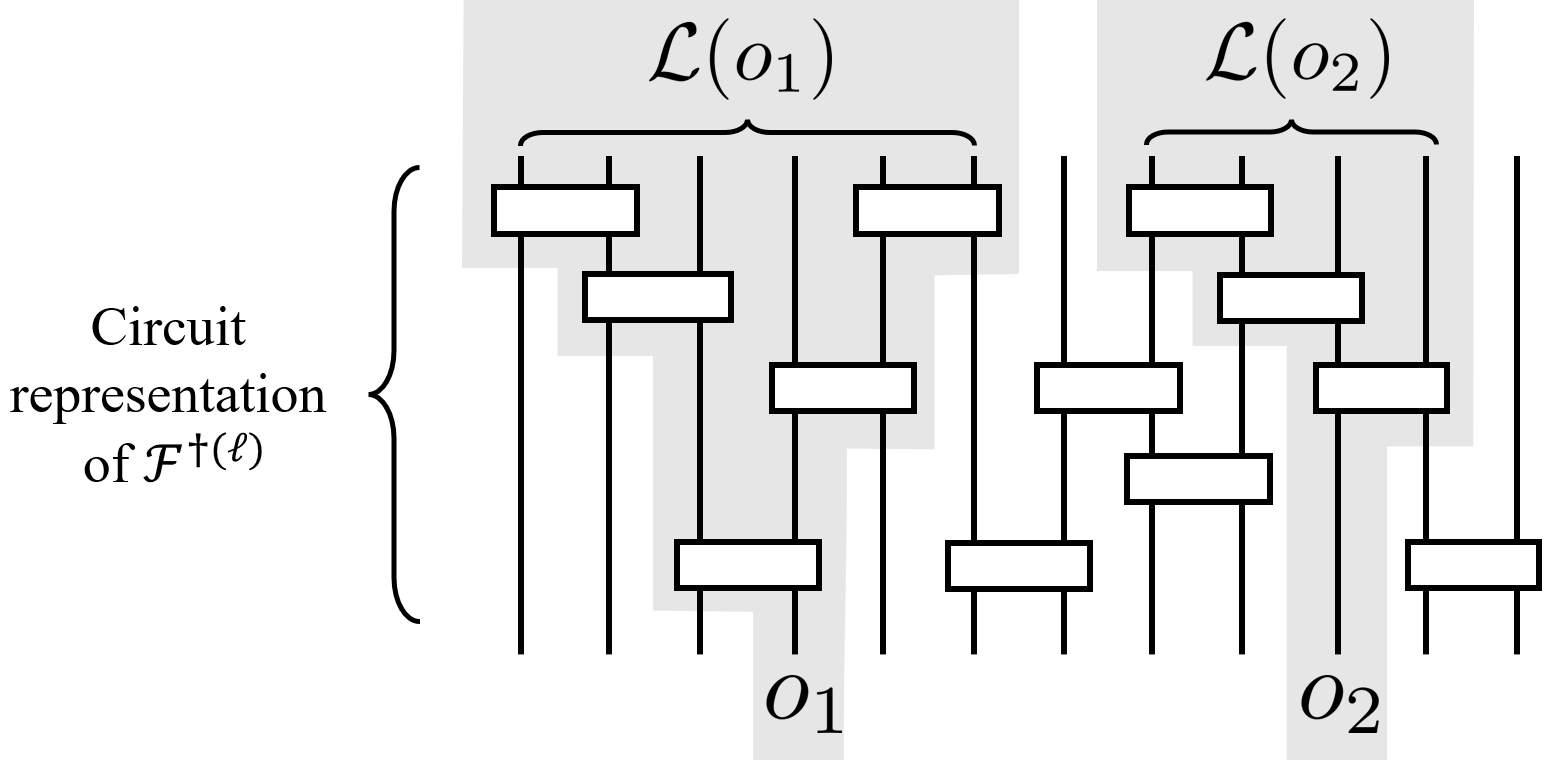}
\end{equation*}
To prove Eq.\eqref{eq: operator_product}, we decompose $\mathcal{F}^{\dagger}$ into $O(1)$ number of layers (as is drawn in the figure above). Each individual layer is of the form $\mathcal{E}^\dagger_1\otimes\mathcal{E}^\dagger_2\otimes...\otimes\mathcal{E}^\dagger_K$, where each $\mathcal{E}_{I}^\dagger$ is a dual channel that acts on a block of sites referred to as $B_I$ such that $B_I\cap B_J=\varnothing$ whenever $I\neq J$. We inspect $o_1 o_2$'s transformation under a single layer. Assuming $o_1$, $o_2$ are supported on $B_{I}$ and $B_J$ respectively, we have
\begin{equation} 
\mathcal{E}^\dagger_1\otimes\mathcal{E}^\dagger_2\otimes...\otimes\mathcal{E}^\dagger_K(o_1 o_2)= \mathcal{E}_I^\dagger(o_1)\mathcal{E}_J^\dagger(o_2)
\end{equation}
because dual channels are unital. By doing induction over all the layers in $\mathcal{F}^{\dagger}$, we arrive at our conclusion Eq.\eqref{eq: operator_product}.

\subsection{Derivation of Eq.\eqref{eq: ghz_dephase_channel_iteration}} \label{app: ghz_iteration}
We prove the channel equality by checking its action on $\ket{0}\bra{0}$, $\ket{0}\bra{1}$ and $\ket{1}\bra{1}$, which form a basis of linear operators for a single qubit.

We first act on the isometry and the noise channel:
\begin{equation}
    \begin{aligned}
        (\mathcal{N}_{p}^X)^{\otimes b} \circ \mathcal{U}_{w_b}(\ket{0}\bra{0}) 
        &= \sum_{\bs\in\{0,1\}^b} p^{|\bs|}(1-p)^{b-|\bs|}\ket{\bs}\bra{\bs} = \sigma_{00}\\
        (\mathcal{N}_{p}^X)^{\otimes b} \circ \mathcal{U}_{w_b}(\ket{0}\bra{1}) 
        &= \sum_{\bs\in\{0,1\}^b} p^{|\bs|}(1-p)^{b-|\bs|}\ket{\bs}\bra{\bar\bs}= \sigma_{01}\\
        (\mathcal{N}_{p}^X)^{\otimes b} \circ \mathcal{U}_{w_b}(\ket{1}\bra{1}) 
        &= \sum_{\bs\in\{0,1\}^b} p^{|\bs|}(1-p)^{b-|\bs|}\ket{\bar\bs}\bra{\bar\bs} = \sigma_{11}\\
    \end{aligned}
\end{equation}
Then we act the majority vote channel. Its action on $\sigma_{00}$ is:
\begin{equation}
\begin{aligned}
    \mathcal{E}(\sigma_{00})
    &=\sum_{\bs} p^{|\bs|}(1-p)^{b-|\bs|} \ket{\mathsf{maj}(\bs)}\bra{\mathsf{maj}(\bs)}\otimes\tr(\ket{\mathsf{diff}(\bs)}\bra{\mathsf{diff}(\bs)})\\
    &=\left(\sum_{\bs: |\bs|<b/2} p^{|\bs|}(1-p)^{b-|\bs|}\right)\ket{0}\bra{0} + \left(\sum_{\bs: |\bs|>b/2} p^{|\bs|}(1-p)^{b-|\bs|}\right)\ket{1}\bra{1}\\
    &= (1-p')\ket{0}\bra{0} + p' \ket{1}\bra{1}\\
    &= \mathcal{N}^X_{p'}(\ket{0}\bra{0}),
\end{aligned}
\end{equation}
where
\begin{equation}
    p' = \sum_{s:|s|>b/2} p^{|s|}(1-p)^{b-|s|} = \sum_{k=(b+1)/2}^{b} \binom{b}{k}p^{k}(1-p)^{b-k}
\end{equation}
Following a very similar calculation, we have:
\begin{equation}
\begin{aligned}
    \mathcal{E}(\sigma_{01}) 
    &= (1-p') \ket{0}\bra{1} + p' \ket{1}\bra{0}
    &=\mathcal{N}^X_{p'}(\ket{0}\bra{1})\\
    \mathcal{E}(\sigma_{11}) 
    &= (1-p') \ket{1}\bra{1} + p' \ket{0}\bra{0}
    &=\mathcal{N}^X_{p'}(\ket{1}\bra{1})
\end{aligned}
\end{equation}
Then we obtain Eq.\eqref{eq: ghz_dephase_channel_iteration}.

\subsection{Derivation of Eq.\eqref{eq: tc_state_independence}} \label{app: tc_state_independence}
The statement is equivalent to stating that $\mathcal{D}_\psi\circ\mathcal{N}_\psi(\ket{\phi})$ is anyon free, namely:
\begin{equation}
    \tr(\mathcal{D}_\psi\circ\mathcal{N}_\psi(\ket{\phi}) O)=1
    \quad
    \forall O\in\left\{A_\square, B_+\right\},~\ket{\phi}\in V
\end{equation}
which is further equivalent to:
\begin{equation}
    \tr(\ket{\phi}\bra{\phi}\mathcal{N}^\dagger_\psi\circ\mathcal{D}^\dagger_\psi(O))=1
\end{equation}
We use $A$ to denote the spatial support of  $\mathcal{N}^\dagger_\psi\circ\mathcal{D}^\dagger_\psi(O)$.
$A$ is a topologically trivial region because both $\mathcal N$ and $\mathcal{D}$ are LC transformations.
Thus the reduced density matrix of $\ket{\psi}$ and $\ket{\phi}$ over the region $A$ must be the same, and we can conclude the above equation holds for arbitrary $\ket{\phi}\in V$.

\section{Convergence of real-space RGs} \label{app: convergence}
In Sec.\ref{sec: rg_to_phase} it is stated that if a mixed state's RG flow $\{\rho^{(0)}, \rho^{(1)}, ...\}$ satisfies the following conditions for large enough $\ell$:
\begin{equation*}
\begin{aligned}
    &\text{Eq.\eqref{eq: convergence_condition1}:~~~} F(\rho^{(\ell)}, \rho^{(\infty)}) \simeq \exp(-\alpha\theta^{(\ell)} L^{(\ell)})\\
    &\text{Eq.\eqref{eq: convergence_condition2}:~~~} \theta^{(\ell+1)} \lesssim (\theta^{(\ell)}) ^ {\gamma} ~~~{\rm when}~~~ \theta^{(\ell)}\rightarrow 0_+ ,\\
\end{aligned}
\end{equation*}
then choosing
\begin{equation}
\ell^* = O(\log\log(L/\epsilon))
\end{equation}
guarantees $F(\rho^{(\ell^*)}, \rho^{(\infty)}) > 1-\epsilon$ for a small $\epsilon$. $L^{(\ell)} = L/b^{\ell}$ is the renormalized system size.

In this appendix we first prove the statement, then show the validness of conditions Eqs.\eqref{eq: convergence_condition1},\eqref{eq: convergence_condition2} in several scenarios, including all examples we studied in the main text. 

Suppose $\ell_0=O(1)$ simultaneously satisfies the following two conditions: Eq.\eqref{eq: convergence_condition2} holds when $\ell>\ell_0$ and that $\theta^{(\ell_0)}:=\theta_0<1$. Iterating the condition $(\ell-\ell_0)$ times, we get:
\begin{equation}
    \theta^{(\ell)} < \left(\theta_0 \right)^{\gamma^{\ell-\ell_0}}.
\end{equation}
We need to find how large $\ell$ needs to be, in order to satisfy:
\begin{equation}
    \exp(-\alpha\theta^{(\ell)} L^{(\ell)}) > 1-\epsilon,
\end{equation}
which is implied by (note that $L^{(\ell)}<L$):
\begin{equation}
    \theta^{(\ell)} < \epsilon/L,
\end{equation}
which is further implied by: 
\begin{equation}
    \ell > \ell_0 + \log_\gamma \log_{\theta_0^{-1}}(L/\epsilon) = O(\log\log(L/\epsilon))
\end{equation}
This completes the proof.

In the rest of this appendix section, we show several scenarios where the condition Eq.\eqref{eq: convergence_condition1}\eqref{eq: convergence_condition2} is satisfied. 
For the case of matrix product state with a tree tensor network RG, the analysis is done thoroughly in a recent work \cite{malz2023preparation}.

\subsection{Classical statistical mechanics models} \label{app: convergence_classical}
Let us consider a classical statistical mechanics model with the Hamiltonian:
\begin{equation}
    H_g = H_0 - g H'
\end{equation}
We assume that under some given RG process, $H_0$ is the RG fixed point while $H'$ is an irrelevant perturbation with respect to $H_0$. Further, we assume both $H_0$ and $H'$ are summations of spatially local terms each involving $O(1)$ number of sites. Further, each spin only appears $O(1)$ number of terms, and terms are uniformly bounded.

Let $\rho_g\propto \exp(-H_g)$ be the Gibbs state of $H_g$. We are interested in how fast it approaches the fixed-point $\rho_0$. We use the fidelity as a measure of closeness between the two states:
\begin{equation} \label{eq: fidelity_classical_states}
    F(\rho_g, \rho_0) = |\sqrt{\rho_g}\sqrt{\rho_0}|_1 = \tr(e^{-H_g/2}e^{-H_0/2}) / \sqrt{Z_g Z_0} = Z_{g/2} / \sqrt{Z_g Z_0}
\end{equation}
where $Z_g := \tr(\rho_g)$ is the partition function. We note that since $\rho_g$ is a classical state, the fidelity function $F$ coincides with the Bhattacharyya coefficient, a  measure of similarity for classical distributions.

We have:
\begin{equation}
    Z_g = \tr\left(e^{-H_0}\sum_{n}\frac{g^n}{n!} (H')^n\right) = Z_0 \sum_n \frac{g^n}{n!}\langle(H')^n\rangle_0
\end{equation}
Without loss of generality, we assume each term in $H'$ is positive, so that: $0\leq\langle(H')^n\rangle_0\leq|H'|^n$. Thus we have:
\begin{equation}
   Z_0  \leq Z_g \leq Z_0 e^{g |H'|}
\end{equation}
Combining with Eq.\eqref{eq: fidelity_classical_states}, we get:
\begin{equation}
    F(\rho_g, \rho_0) \geq e^{-g|H'|}
\end{equation}
Since $|H'|$ is a summation of spatially local terms whose norms are uniformly bounded, $|H'|$ must be upper-bounded by $\alpha L$ for some $\alpha=O(1)$ and large $L$. This leads to:
\begin{equation}
    F(\rho_g, \rho_0) \geq e^{-g\alpha L}
\end{equation}
Thus Eq.\eqref{eq: convergence_condition1} is satisfied. The Eq.\eqref{eq: convergence_condition2} is satisfied because $g$ is an irrelevant coupling for a non-critical fixed point.

\subsection{Noisy GHZ state in Sec.\ref{sec: noisy_ghz}} \label{app: convergence_noisy_ghz}

\noindent\textbf{(a) Bit flip noise}

The fidelity function is:
\begin{equation}
    \begin{aligned}
        F(\rho^X_{p, L},~ \rho^X_{0, L})
        &= \bra{\GHZ}\rho^X_{p, L}\ket{\GHZ}\\
        &= \frac{1}{2}\sum_{\bs\in\{0,1\}^{L}} p^{|\bs|}(1-p)^{L-|\bs|}\bra{\GHZ}(\ket{\bs}\bra{\bs}+\ket{\bar \bs}\bra{\bs}+\ket{\bs}\bra{\bar \bs}+\ket{\bar \bs}\bra{\bar \bs})\ket{\GHZ}\\
        &= (1-p)^{L} + p^{L}\\
        &\approx e^{-pL}
    \end{aligned}
\end{equation}
The approximation follows from $p\ll 1$.
The iteration relation of $p$ satisfies:
\begin{equation}
    p' =
   \sum_{k=(b+1)/2}^{b} \binom{b}{k}p^{k}(1-p)^{b-k}
    \leq 2^{b-1} p^{(b+1)/2}
    \leq p^{b/2}
\end{equation}
the last inequality holds if $p<2^{-2b+2}$, which can be always achieved after an $O(1)$ number of RG iterations starting from any $p\in (0, 0.5)$.

\noindent\textbf{(b) Phase-flip noise}

The fidelity function is:
\begin{equation}
    \begin{aligned}
        F(\rho^Z_{p, L},~ \rho^Z_{1/2, L})
        &= \tr\sqrt{\sqrt{\rho^Z_{1/2, L}}~\rho^Z_{p, L}~\sqrt{\rho^Z_{1/2, L}}}\\
        &= \frac{1}{2} + \frac{1}{2}\sqrt{1-(1-2p)^{2L}}
    \end{aligned}
\end{equation}
We notice that the fidelity function actually goes to $1$ when $L\rightarrow \infty$. This can be treated as a special case of the condition Eq.\eqref{eq: convergence_condition1} for $\alpha=0$.

The iteration relation for $p$ around the stable fixed-point $p=0.5$ is:
\begin{equation}
    (p'-1/2) = (p-1/2)^b
\end{equation}

\subsection{Thermal toric code state in Sec.\ref{sec: thermal_tc}}\label{app: convergence_thermal_tc}
We consider the fidelity between the finite temperature state and the infinite temperature one. Note that the latter is proportional to identity. Thus,
\begin{equation}
    \begin{aligned}
        F(\rho_{\beta},~ \rho_{0})
        &= 2^{-\frac{L^2}{2}}\tr\sqrt{\rho_{\beta}}\\
        &= 2^{-\frac{L^2}{2}}\sum_{\bm, \be, \bl}\sqrt{\Pr_m(\bm)\Pr_e(\be)\Pr_l(\bl)}\\
        &= 2^{-\frac{L^2}{2}+1}\left(\sum_{\bm}\sqrt{\Pr_m(\bm)}\right)^2\\
        &\geq 2^{-\frac{L^2}{2}+1}\left(\sum_{\bm:\pi(\bm)=0}p_\beta^{|\bm|/{2}}(1-p_\beta)^{(L^2/2-|\bm|)/2}\right)^2\\
        &= 2^{-\frac{L^2}{2}+1}(1-p_\beta)^{L^2/2}\left(\sum_{\bm:\pi(\bm)=0}e^{-\beta|\bm|}\right)^2\\
        &\geq 2^{-\frac{L^2}{2}+1}2^{-L^2/2} \left(2^{L/2-1} e^{-\beta L^2/2}\right)^2\\
        &= 2^{-\beta L^2 - 1}
    \end{aligned}
\end{equation}
The iteration relation of the inverse temperature $\beta$ is given by 
\begin{equation}
    \beta' = \tanh^{-1}\tanh^4 \beta \approx \beta^4
\end{equation}
at small $\beta$, which satisfies the condition.

\subsection{Noisy toric code state in Sec.\ref{sec: noisy_tc}} \label{app: convergence_noisy_tc}
The renormalized state's overlap with the ground state subspace $V$ is:
\begin{equation}
    F^{(\ell)}:= \tr\left(\rho^{(\ell)}\Pi_{V}\right)
\end{equation}
where $\Pi_{V}$ is the projection to $V$.

Since $\rho$ is always diagonal in the anyon number basis throughout the RG process, the quantity is the probability of having zero anyon after $l$ steps of RG. The quantity is bounded by the anyon density $q_p^{(\ell)}$:
\begin{equation}
    q_p^{(\ell)} = \sum_{a=1}^{L^2} \Pr(|m|=a) \frac{a}{L^2} \geq \frac{1}{L^2}\sum_{a=1}^{L^2} \Pr(|m|=a) = \frac{1}{L^2}(1-F^{(\ell)})
\end{equation}
leading to:
\begin{equation}
    F^{(\ell)}\geq 1-L^2 q_p^{(\ell)} \simeq e^{-L^2 q_p^{(\ell)}}.
\end{equation}
The last approximate equal sign holds only when $L^2 q_p^{(\ell)}$ is small. 

Furthermore, according to the numerical results Fig.\ref{fig:dephased_tc}(c), the decay of anyon density follows:
\begin{equation}
    q_p^{(\ell+1)} < (q_p^{(\ell)})^\gamma
\end{equation}
for some $\gamma>1$.

\section{RG of a mixed $\mathbb{Z}_2\times \mathbb{Z}_2$ SPT state} \label{app: spt_rg}

In this appendix, we consider the RG of a mixed SPT (symmetry protected topological) state. 
The problem was first considered in \cite{de2022symmetry}, where the authors use the string order parameter as a definition for the mixed state SPT. They show that, when a pure SPT state is subject to noise, the string order parameter (\textit{i.e.} the SPT phase) is preserved if and only if the noise is strongly symmetric, meaning that all its Kraus operators commute with the symmetry operator.

The above definition (via string order parameters) of mixed-state SPT actually agrees with the LC transformation based definition. To show this, we put forward a symmetric RG transformation that brings the noisy state back to a clean one. The RG is the same as the one proposed in \cite{lake2022exact}. Although in \cite{lake2022exact} the circuit is designed for recognizing pure-state phases, we point out that it can be readily applied on a mixed-state SPT states.

Here we demonstrate the principle with a simple (1+1)D SPT state, namely the $Z_2\times Z_2$ SPT. Consider a 1D lattice spin chain where in the bulk each site contains 2 qubits, labeled as $A$ and $B$. The pure SPT wavefunction can be written as:
\begin{equation}
    \ket{\psi} = \bigotimes_{i=-\infty}^{+\infty} \ket{\EPR_{i_B, (i+1)_A}},
\end{equation}
where each $\ket{\EPR_{i,j}}:=\frac{1}{\sqrt2}(\ket{0_i 0_j}+{\ket{1_i 1_j}})$. Note that the above way of defining the $Z_2\times Z_2$ SPT is related to the cluster state by rotating the two spins within a unit cell with a CNOT gate. It is straightforward to verify that the state has a $Z_2\times Z_2$ symmetry generated by 2 generators:
\begin{equation}
   U^X=\prod_i X_{i_A}X_{i_B}, \quad U^Z=\prod_i Z_{i_A}Z_{i_B}.
\end{equation}
Two generators define two string order parameters:
\begin{equation}
    S^X_{ij} = X_{i_B} \left(\prod_{k=i+1}^{j-1}X_{k_A}X_{k_B}\right) X_{j_A}, 
    \quad 
    S^Z_{ij} = Z_{i_B} \left(\prod_{k=i+1}^{j-1}Z_{k_A}Z_{k_B}\right) Z_{j_A}
\end{equation}
One signature of the SPT order in $\ket{\psi}$ is that expectation value of string order parameters does not decay with $|i-j|$.

We consider the $XX$ dephasing channel:
\begin{equation}
    \mathcal{N}^{XX}_p(\cdot) = (1-p)(\cdot) + p X_A X_B (\cdot)X_A X_B,
\end{equation}
whose action can be realized by the pair of spins within each site being flipped simultaneously with probability $p$.
The channel is strongly symmetric under $U_X$ and $U_Z$, because its Kraus operators commute with symmetry operators. 

The noisy SPT state of our interest is obtained by applying $\mathcal{N}_p$ uniformly upon $\psi$:
\begin{equation}
    \sigma_{p, L} = \mathcal{N}_p^{\otimes L}(\ket{\psi}\bra{\psi}).
\end{equation}

It will be useful to unravel $\sigma_p$ into a classical mixture of pure SPT states decorated by domain walls:
\begin{equation}
\begin{aligned}
    \sigma_p &= \E[\ket{\psi_{\bs}}\bra{\psi_{\bs}}]= \sum_{\bs\in\{0,1\}^L}P(\bs)\ket{\psi_{\bs}}\bra{\psi_{\bs}}\\
    P(\bs) &:= p^{|\bs|}(1-p)^{|\bs|}\\
    \ket{\psi_{\bs}} &:= \bigotimes_{i} X^{\bs_i}_{i_A} X^{\bs_{i+1}}_{(i+1)_B}\ket{\EPR_{i_B, (i+1)_A}}
\end{aligned}
\end{equation}

Now we focus on a block of $b$ sites within the state and devise the channel that coarse-grains the block. The part of the $\ket{\psi_\bs}$ within the block can be drawn as (for $b=3$):
\begin{equation*}
    \eqfig{2cm}{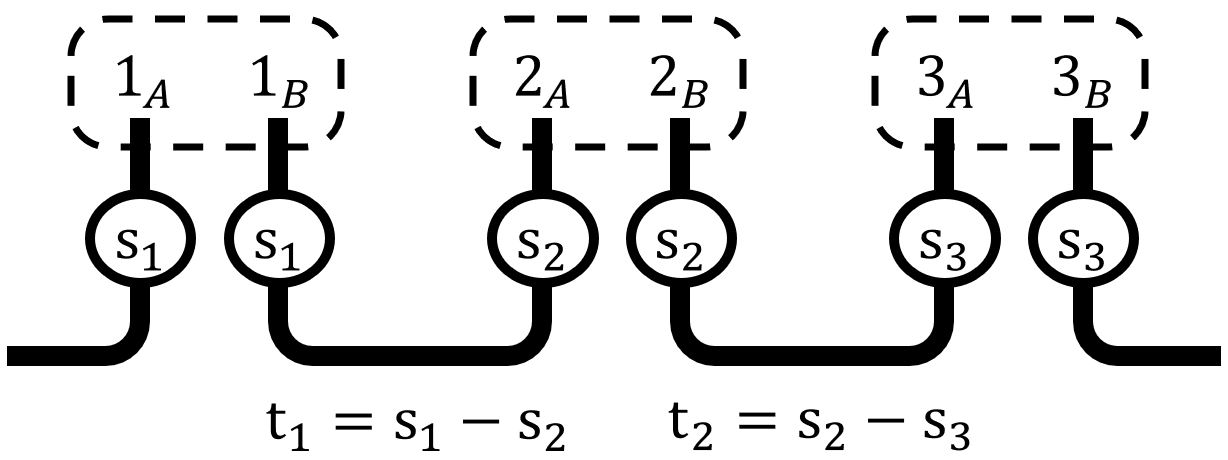}
\end{equation*}
where each hollowed circle is either an identity gate if $s=0$ or an $X$ gate if $s=1$. Each $s_i$ is an independent Bernoulli random variable with chance $p$. 

The renormalized site is formed by qubits $\{1_A, b_B\}$, while the $(b-2)$ entangled pairs supporting on the remaining qubits $\{1_B, 2_A, 2_B,..., b_A\}$ can help us infer and correct errors $s_1$ and $s_b$ before they get traced out. More concretely, the coarse-graining channel $\mathcal{E}$'s action is the following:
\begin{enumerate}
    \item Measure each Bell pairs \textit{within} the block in the $ZZ$ basis, whose outcome records the domain-walls between $s_i$, i.e., $\{t_1 = s_1-s_2, t_2 = s_3-s_2, ..., t_{b-1}=s_{b-1}-s_b\}$. The outcomes decide all the $s_i$ up to a global flip, \textit{i.e.} once we assume a value for the first error $s_1=\hat{s}_1$, the remaining $s$ are also uniquely determined by: $\hat{s}_k=(\sum_{j=1}^{k-1} t_j)-\hat{s}_1$.
    \item Assume 
    \begin{equation}
        \hat{s}_1=\argmax_{s_1 \in \{0,1\}} \Pr (s_1|t_1,...,t_{b-1})
    \end{equation}
    to be the actual error, and correct all sites within the block. Namely, applying $X_{i_A}X_{i_B}$ if $\hat s_i=1$. After this step, all Bell pairs within the block are noiseless and decoupled from $\{1_B, b_A\}$.
    \item Trace out $\{1_B, 2_A, 2_B,..., b_A\}$. Then we obtain the renormalized site with $\{1_B, b_A\}$,
\end{enumerate}

It is clear that channel $\mathcal{E}$ is strongly symmetric under $Z_2\times Z_2$ since each individual step is. The renormalized site gets a $XX$ error if and only if $\hat{s}_1\neq s_1$, whose probability we denote as $p'$. Thus we find that the normalized state is still a symmetrically dephased SPT state, but with a normalized noise strength $p'$:
\begin{equation}
    \mathcal{E}^{\otimes \frac{L}{b}}(\sigma^{\mathrm{SPT}}_{p,L}) = \sigma^{\mathrm{SPT}}_{p',L/b}
\end{equation}
The explicit form of $p'$ is:
\begin{equation}
    p' = \sum_{k = (b+1)/2}^{b} \binom{b}{k} p^{k} (1-p)^{b-k}
\end{equation}
Thus we obtain a similar RG flow as in the $X$-dephased GHZ state studied in Sec.\ref{sec: noisy_ghz}: symmetrically decohered SPT state flows back to a pure SPT state when $0\leq p < 0.5$.
\end{document}